\newtheorem{proposition}{Proposition}
\newtheorem{lemma}{Lemma}
\begin{document}

\title{The Capacity of Ad hoc Networks under Random Packet Losses}
\author{
Vivek P.~Mhatre,
Catherine P.~Rosenberg,
Ravi R.~Mazumdar\thanks{Vivek P.~Mhatre is with Motorola Inc., Arlington Heights, IL, USA. Email: mhatre@gmail.com,
Catherine P.~Rosenberg and Ravi R.~ Mazumdar are with the Department of Electrical and Computer 
Engineering, University of Waterloo, Canada. Email: \{cath,mazum\}@ece.uwaterloo.ca.}
}

%\markboth{Preprint: Mhatre, The Capacity of Random Ad hoc Networks}
%{Preprint: Mhatre, The Capacity of Random Ad hoc Networks}

\maketitle

\begin{abstract}
%\noindent
%\colorbox{yellow}{\parbox{8.6cm}{
We consider the problem of determining asymptotic bounds on the capacity of a random ad hoc network\footnote{Preliminary versions of this work appeared in  \cite{isit,preprint}.}.
Previous approaches assumed a 
link layer model in which if a transmitter-receiver pair can communicate with each other, i.e., the
Signal to Interference and Noise Ratio (SINR) is above a certain threshold, then every transmitted packet
is received error-free by the receiver thereby. Using this model, the per node capacity of the network
was shown to be $\Theta\left(\frac{1}{\sqrt{n\log{n}}}\right)$.
In reality, for any finite link SINR, there is a non-zero probability of erroneous reception of the packet.
We show that in a large network, as the packet
travels an asymptotically large number of hops from source to destination, the cumulative impact of
packet losses over intermediate links results in a per-node throughput of only $O\left(\frac{1}{n}\right)$.
We then propose a new scheduling scheme to counter this effect.
The proposed scheme provides tight guarantees on end-to-end packet loss probability,
and improves the per-node throughput to $\Omega\left(\frac{1}{\sqrt{n}
\left({\log{n}}\right)^{\frac{\alpha{{+2}}}{2(\alpha-2)}}}\right)$ where $\alpha>2$ is the path loss exponent.
\end{abstract}

\section*{Keywords}
Ad hoc networks, Capacity, SINR, Interference, Scheduling

\section{Introduction}
The problem of the capacity of wireless ad hoc networks was first
analyzed by Gupta and Kumar in their seminal work \cite{gupta}. 
This work was followed by several studies on the capacity of wireless ad hoc networks \cite{liang}, 
\cite{gaurav}, among others.
The authors in \cite{gupta} derive asymptotic
bounds on the capacity of a random ad hoc network
in which nodes are deployed randomly and uniformly over the surface of a
sphere of unit area. Each node picks a
random node as its destination node, and sends packets to that node by using
multi-hop communication. All the nodes use a common transmit power level.
%To derive their results, the authors assume that the region is covered with a tessellation of cells.
%The cells are assumed to have certain uniformness properties in terms of size and shape. 
%Such a uniform tessellation
%simplifies the capacity and connectivity analysis \cite{guptaconn}, since it is easier to account for routing and scheduling when the cells
%have uniform size and shape.
The authors show that 
%demonstrate that under certain simplifying link layer assumptions for successful packet reception, 
each node can achieve a throughput of $\Theta\left(
{\frac{1}{\sqrt{n\log{n}}}}  \right)$ packets per second.
They also provide a scheduling and routing strategy that achieves this throughput.

The authors in \cite{gupta} assume a link layer model in which, if the Signal to
Interference and Noise Ratio (SINR) at the receiver is greater than a certain threshold then
the packet is received successfully by the receiver.
In practical wireless networks, for a given modulation and coding scheme, and for a fixed block length,
as long as there is some noise and interference, i.e., as long as
the SINR of a link is finite, there is a certain non-zero probability of packet error.
%which approaches zero as the SINR approaches infinity 
%\cite{tse_viswanathan}. 
Hence, the hypothesis of perfect packet reception when the link SINR is above a certain threshold can be realized only when
one of the following is assumed: (i) infinite block lengths, or (ii) infinite number of retransmissions, i.e., retransmit the packet
until it gets through. Both these approaches lead to unbounded delays, and hence under either (i) or (ii), a delay-capacity tradeoff
study should be undertaken instead of a capacity study.
%Thus, over {\em each hop}, the mapping from SINR to packet success probability is a continuous function that approaches one
%only asymptotically as the SINR approaches infinity.
In practical systems in order to keep the overall delay under tabs, the block length, as well as the maximum number of
retransmission attempts (in case retransmissions are employed) are pre-determined, and fixed.
Hence, the threshold-based packet reception model used in \cite{gupta} is a reasonable choice for successful packet
reception in a single hop network such as a
cellular network. However, we argue that it needs to be refined when applied to a multi-hop network.
%In a cellular network, if the SINR threshold $\beta$ is chosen to be sufficiently high, then the packet
%error rate between the mobile and the base station (and vice versa) can be made very small
%for a given choice of modulation and coding and block length.
In an ad hoc network, each packet traverses 
{\em multiple hops}. 
%The links of these hops receive interference from other ongoing transmissions
%which could potentially corrupt the packet transmission over the  given link.
When a packet is relayed over a large number of links, each of which being likely to drop the packet
with a certain probability (no matter how small it is), 
the end-to-end throughput can degrade significantly due to the {\em{cumulative}} packet error probability.
%of end-to-end packet delivery.
%In \cite{gupta}, the capacity determining constraint is the 
%number of source-destination paths that pass through a cell, i.e., the relaying burden of a cell.
%However we show that the end-to-end packet error probability 
%has a significant role to play in determining the node throughput.

More generally, in this paper, we 
show that when studying capacity scaling problems, the underlying hypotheses have a 
significant impact on the results. In particular, we show that when packet losses over links are taken into account,
%As long as block length as well as retransmission limit
%are fixed and finite, even with a minimum SINR of $\beta$, there is non-zero packet loss probability over the link.
%Specifically, do we first scale the block lengths so that loss probability over each
%hop can be assumed to be zero, and then scale the network size, i.e., the number of nodes in the network, or vice-versa? 
%In realistic scenarios, the link level quantities such
%as block lengths and/or possible modulation coding schemes are fixed for the hardware of a chosen wireless device. When the
%size of the network increases, i.e., as the number of nodes in the network grows, it is not possible to upgrade the hardware capabilities
%of the devices.
we get strikingly different results thereby showing the sensitivity of capacity studies
to the underlying hypotheses.
%due to interference on each individual hop, 
%should also be taken into account when determining the network capacity.

%Our work is motivated by the fact that in a large ad hoc network, a packet has to travel
%over an asymptotically large number of links, each of which is unreliable with a certain probability.
For a random ad hoc network, for a broad range of routing and scheduling schemes including the one
proposed in \cite{gupta},
% we show the existence of a positive number $p<1$ that uniformly upper 
%bounds the packet success probability of
%at least a {\em fixed  fraction} of hops on each
%source-destination path.
%is sufficient to considerably degrade the network throughput. 
%Note that showing the existence or non-existence of a single number $p < 1$ that uniformly upper
%bounds the probability of successful reception over {\em all}\, the hops is not easy. 
%We then show the existence of such a $p<1$ for a broad range of routing and scheduling schemes
%(including the one proposed in \cite{gupta}), and t
we show that the cumulative impact of  per link packet loss results in a much lower per-node throughput
of $O\left(\frac{1}{n}\right)$ instead of $\Theta\left({\frac{1}{\sqrt{n \log{n}}}}\right)$. 
%Note that an asymptotic per node throughput of $O\left(\frac{1}{n}\right)$  can be easily obtained
%even with a simple round-robin direct communication between source and destination nodes without
%any need for multi-hop communication.
%Note that proving the existence of such a {\em{uniform}} $p$ is not easy because, due to
%the random node distribution it is possible that for several hops, the transmitter and the receiver nodes
%may be arbitrarily close to each other.
%This results in a very high SINR, and 
%consequently a probability of packet success that is arbitrarily close to one.
%
In order to counter the above throughput reduction due to cumulative packet error effect,
we propose a new scheduling policy that uses lower spatial reuse to
reduce interference, and thereby improves the SINR of each link. 
%Our scheduling policy uses a vertex coloring algorithm over the interference
%conflict set to ensure an SINR of at least $\beta_n$ for each scheduled link.
%Unlike \cite{gupta} where the SINR threshold $\beta$ does not scale with $n$,
%under our proposed scheduling policy, the SINR threshold $\beta_n$ increases as $n$ scales.
%\noindent{\colorbox{red}{\parbox{\linewidth}{
%}}}
%all the links in the conflict set of a given link are silent when the given link
%is scheduled.
%The conflict set of a link is all the links whose individual or cumulative
%interference can corrupt the transmission of the given link \cite{karnik}.
The proposed scheduling policy improves the end-to-end packet success probability, and
results in a per-node throughput of $\Omega\left(\frac{1}{\sqrt{n}
\left({\log{n}}\right)^{\frac{\alpha{{+2}}}{2(\alpha-2)}}}\right)$,
where $\alpha>2$ is the path loss exponent.

%Since our arguments are mainly based on the impact of interference (via scheduling) and
%packet error probability over long paths (asymptotically large number of hops), it is easy to see that the results are not restricted to a
%specific communication paradigm such as the any-to-any communication model of \cite{gupta}. The {\em key
%insight} is that if in a network,
%nodes use long multi-hop paths to reach their destination, and scheduling is used for exploiting spatial reuse, then
%interference from other ongoing transmissions, and end-to-end error probability may degrade the
%network throughput. Hence a trade-off between spatial reuse and cumulative packet error
%probability has to be studied.

The rest of the paper is organized as follows. 
We motivate the problem, and outline our approach in
Section \ref{approach}. The analytical models used for the physical and the
network layer are presented in Section \ref{prob}.
We present results on throughput reduction due to cumulative packet loss
in Section \ref{secpi1} and Section \ref{secpi1rtng}.
We present our new scheduling policy, and obtain the corresponding
capacity results in Section \ref{secpi2}.
We discuss some of the related work in Section \ref{related}.
Finally, we present our conclusion and future directions in Section \ref{conclude}.

%The analytical models used for the physical and the
%network layer are presented in Section \ref{prob}.
%We present our results on throughput reduction due to cumulative packet loss
%in Section \ref{secpi1}.
%We present our new scheduling policy, and obtain the corresponding
%capacity results in Section \ref{secpi2}. 
%Detailed proofs and some miscellaneous lemmas are omitted
%due to space constraints and can be found in \cite{techrep}.
%We discuss some of the related work in Section \ref{related}.
%Finally, we present our conclusion and future directions in Section \ref{future}.

\section{Motivation and Approach}
\label{approach}
The arguments in this section are only meant to provide intuitive insights about the problem and our approach. 
We provide precisely proved results for all the arguments in Section \ref{prob}.
In \cite{gupta}, the authors study the problem of determining the capacity of random ad hoc networks.
They propose a routing and scheduling scheme to achieve a per-node throughput
of $\Theta\left({\frac{1}{\sqrt{n \log n}}}\right)$. 
It can be shown that for this scheme a packet traverses $\Theta\left( \sqrt{\frac{n}{\log n}}\right)$ intermediate hops
on its way from a source node to its destination node.
Thus asymptotically, the number of hops that a packet has to travel from source to destination goes
to infinity as $n$ scales.
The authors assume that if the SINR over a hop is at least $\beta$,
the packet transmission is successful. Let us call this the ideal link model.
The authors then use a scheduling scheme which ensures an SINR of at least $\beta$ over all the
hops. Thus, under the ideal link model, and the proposed scheduling policy, {\em every}
packet transmission is successful.

Now consider another link model in which a link is reliable with a probability $p<1$, i.e.,
each packet transmitted on the link is received successfully at the receiver with a probability of $p$.
Let us call this the probabilistic lossy link model.
For simplicity, assume that over each hop, if a packet is not received successfully
by the receiver, the transmitter does not retransmit the packet, and the packet is lost
(the generalization to handle retransmissions is addressed subsequently).
Also assume that we use the same routing and scheduling policy as used above for the ideal link model.
Since there are  $\Theta\left( \sqrt{\frac{n}{\log n}}\right)$ hops from source to destination,
the probability that the packet reaches its destination scales as $p^{\sqrt{\frac{n}{\log n}}}$.
It is easy to show that this quantity is $O\left(\sqrt{\frac{\log n}{n}}\right)$, since $p^m$ is
$O\left(\frac{1}{m}\right)$ as $m$ tends to infinity.
Let us consider two identical networks, one with the ideal link model, and the other with the 
probabilistic lossy link model, and assume that for both of them, we use the same routing and scheduling
algorithm as outlined in \cite{gupta}. The only difference between the two networks is that while in
the former case, all the packets injected by the source into the network reach the destination, in
the latter case, only a fraction of the injected packets reach the destination.
Hence, under the probabilistic lossy link model, the achievable throughput is the product of the achievable
throughput of the ideal link model, $\Theta\left({\frac{1}{\sqrt{n \log n}}}\right)$, and
an end-to-end packet delivery probability term that is $O\left(\sqrt{\frac{\log n}{n}}\right)$.
This results in an end-to-end throughput of $O\left(\frac{1}{n}\right)$ for the probabilistic lossy
link model.

A realistic link model lies somewhere between the ideal link model and the probabilistic lossy
link model. In a realistic link model, the probability of link reliability, $p$, is a continuous
function of the SINR
of the link. The SINR in turn depends on factors such as transmitter-receiver separation, power of
interference from simultaneous transmissions, and noise power. These factors are different for
different links along a path. We first show that despite ensuring a minimum SINR of $\beta$, there is a non-zero
packet loss probability over each link.
We then show that since the scheduling approach in \cite{gupta} does not take into account the
cumulative packet losses along a path,
it results in a substantially lower throughput ($O\left(\frac{1}{n}\right)$ 
instead of $\Theta\left(\frac{1}{\sqrt{n\log{n}}}\right)$).
We then propose a new scheduling policy under which the throughput performance is $\Theta\left(\frac{1}{\sqrt{n}
\left({\log{n}}\right)^{\frac{\alpha{{+2}}}{2(\alpha-2)}}}\right)$ even in
the presence of packet losses.
%For example, since the nodes are randomly placed, it is possible that for a given hop, the two
%nodes could be arbitrarily close to each other resulting in a very high SINR for that link.
%This in turn means that the packet success probability for that link could be arbitrarily close
%to one. 
%Thus, there is no single $p <1$ which uniformly upper bounds the probability of
%packet success for all the links.
%However, if for at least a fixed fraction of links along every source-destination path,
%there is a $p<1$ which upper bounds the probability of successful packet reception,
%then the per-node throughput bound of $O\left(\frac{1}{n}\right)$ which holds for
%the probabilistic link model also holds for the realistic link model.
%To get around this problem we must design scheduling (and/or retransmission
%strategies) so that no such $p$ exists.
%
%The above arguments are made more precise in Section \ref{prob}.

\section{Model for Random Networks}
\label{prob}
We use the standard ordering notations for $O(.), o(.), \Omega(.), \omega(.), \Theta(.)$.% (see \cite{gupta}).
%\begin{itemize}
%\item $f_n = \Theta\left(g_n\right)$, if there exist constants $a_1$ and $a_2$ such that, for $n$ large enough,
%$a_1 g_n \leq f_n \leq a_2 g_n$
%\item $f_n = \Omega\left(g_n\right)$, if there exists a constant $a_1$ such that, for $n$ large enough,
%$a_1 g_n \leq f_n $
%\item $f_n = O\left(g_n\right)$, if there exists a constant $a_2$ such that, for $n$ large enough,
%$f_n \leq a_2 g_n$
%\item $f_n = \omega\left(g_n\right)$, if there exists a constant $a_1$ such that, for $n$ large enough,
%$a_1 g_n < f_n$
%\item $f_n = o\left(g_n\right)$, if there exists a constant $a_2$ such that, for $n$ large enough,
%$f_n < a_2 g_n$
%\end{itemize}
%and consider the random network model used in \cite{gupta}.

%\noindent{\colorbox{red}{\parbox{\linewidth}{
\noindent{\bf{Scaling model:}}
%Consider a sphere of area $n$, say $S^2$, over which $n$ nodes are deployed randomly and
%uniformly, so that the node density is unity. 
We assume that $n$ nodes are randomly and uniformly deployed
over a sphere $S^2$ of area $n$ so that the node density is kept constant. Under this model, the far-field
assumption can be employed in the path loss model \cite{scaling}.
The nodes are assumed to be stationary. Each node picks a
random node as its destination node, and sends packets to this destination node. 
Each node has a common transmit power level $P$, and uses intermediate nodes as relays to reach
its destination. 
%In the scaling model used in \cite{gupta} the number of nodes, $n$ scales to
%infinity while the area is held fixed at unity.
%Furthermore, the received signal at a distance $x$ is assumed to be $P\cdot
%x^{-\alpha}$.  Note that the path loss model of
%$P\cdot x^{-\alpha}$ is applicable only in the far-field, i.e., when the
%separation between the nodes is significantly larger than the carrier
%wavelength. Hence this model
%cannot be used if we choose to scale the network density to the extent that
%the far-field assumption is violated, as it
%would {\em{incorrectly}} imply that the received power at a node from a
%neighboring node scales to infinity as the network density increases
%as observed by the authors in \cite{scaling}.
%Hence we assume that the node density is kept constant, so that
%the assumption of far field always holds, and the path loss model is not
%employed incorrectly.
We assume that as $n$ scales, the modulation and coding scheme, as well as the block length remain fixed.
Furthermore, we assume that a common frequency chunk of $W$ Hz is available to all the nodes.
This models the scenario in which although the network size increases,
the underlying hardware and software capabilities of the nodes, as well as the available spectrum are unchanged.
%}}}

\noindent{\bf{Physical layer model:}}
The SINR at a receiver node $j$ when transmitter node $i$ is transmitting a packet to $j$ is as follows.
\begin{equation}
\label{model}
\text{SINR} = \frac{ \frac{P}{{|X_i - X_j|}^{\alpha}}}{ N + \sum\limits_{k\in T \mbox{, } k\neq i}\frac{P}{{|X_k - X_j|}^{\alpha}}}
\mbox{, }
\end{equation}
where $N$ is the noise power, $X_i$ is the position vector of node $i$,
$T$ is the set of all the nodes that are transmitting simultaneously with node $i$, and $\alpha
> 2$ is the path loss exponent. 
As in \cite{gupta}, we assume that the transmit power
of all the nodes can be scaled with $n$ so that the effects of receiver noise can be ignored,
and the network becomes interference dominated. We therefore ignore the term $N$ in the denominator of SINR henceforth.
Two nodes can communicate with each other if the SINR at the receiver node is greater than a certain threshold, say $\beta$.
In \cite{gupta}
the authors assume that if two nodes can communicate with each other, then all the transmitted packets are received error-free.
%use the following criterion for successful packet reception.
%If the SINR at a receiver is greater than a certain fixed threshold
%$\beta$, then the packet is successfully decoded by the receiver. In other words, a transmission from
%node $i$ to node $j$ is successful if:
%\noindent{\colorbox{red}{\parbox{\linewidth}{
%
Thus, if $\hat{\phi}(\cdot)$ is the mapping between the SINR and the 
probability of successful packet reception, then the model used in \cite{gupta} is equivalent to the following.
\begin{equation}
\hat{\phi}(\text{SINR}) =
\begin{cases}
1 & \text{ if $\text{SINR} \geq \beta$, }\\
0 & \text{ otherwise.}
\end{cases}
\label{phi}
\end{equation}
%}}}

{\em{The key contribution of our work is to derive capacity results under the
following more accurate physical layer model.}}
We assume that the probability of a error-free packet reception is a continuous increasing
function of SINR, $\phi(\cdot)$
that approaches unity as the SINR approaches infinity. Hence, for every finite SINR value, 
there is a non-zero probability that the packet is received in error which amounts to packet loss.
%We say that a node is within communication range of another node if the receiving node
%can decode the packets of the transmitting node with a minimum probability of success, $\phi(\beta)$ for a fixed $\beta$.
% Note that the fixed block
%length assumption is used here, since in practical wireless networks, the block length cannot be increased arbitrarily
%beyond the coherence time of the wireless channel \cite{tse_viswanathan}. Although we have not explicitly accounted for fading in our model,
%presence of fading in practical networks upper bounds the maximum block length that can be used.
%We assume that over each hop, if a packet is not received successfully
%by the receiver, the transmitter does not retransmit the packet, and the packet is lost. 
The results carry over to the case in which
the number of retransmissions allowed is finite but fixed, i.e., do not depend on $n$.

\noindent{\bf{Interference model:}}
We assume that the interference observed by a packet over different hops along its path from source to destination is independent
across all the hops.
Note that the set of interferers active during the transmission of a packet over different
hops, and the actual packets that the interfering nodes
transmit during those slots are unlikely to be correlated as the packet moves along its path.
Furthermore, the interfering signals themselves, i.e., the sequences of bits in the interfering packets
can be assumed to be independent of each other. Hence the assumption of
independence of interference across multiple hops.

\noindent{\bf{Network connectivity and routing:}}
In order to ensure connectivity, we use the connectivity result from \cite{guptaconn}.
The surface of the sphere is covered by a Voronoi tessellation in such a way that each Voronoi
cell $V$ can be enclosed inside a circle of radius $2\rho_n$, and each circle encloses a circle of
radius $\rho_n$ (all the distances are measured along the surface of $S^2$). We use such a tessellation
to ensure uniform cell size. 
%The cells can be made arbitrarily small in a uniform way by choosing $\rho_n$ small. 
In \cite{guptaconn}, Gupta
and Kumar have derived necessary and sufficient conditions for asymptotic connectivity of a random ad hoc network for
the case of unit network area.
By using results from \cite{guptaconn} in \cite{gupta}, the authors choose $\rho_n$ to be the radius of a circle
of area $\frac{100\log{n}}{n}$ on a sphere of unit area. With this choice of $\rho_n$, and with the communication
radius of each node chosen to be $8\rho_n$, the authors show that all the nodes in the network are
connected with probability approaching unity as $n$ approaches infinity. 
Since we consider a scaled up version of the network deployment in which the network area is scaled by a factor $n$,
the same connectivity results in \cite{guptaconn} are applicable to our model with a normalization factor of
$\sqrt{n}$ for all the distances, and a normalization factor of $n$ for all the areas. Hence, under our model,
$\rho_n$ is chosen to be the radius of a circle of area $100\log{n}$, thereby implying that
$\rho_n=\Theta\left(\sqrt{\log{n}}\right)$.
We assume the straight line routing policy from \cite{gupta} in which 
packets are routed along straight line paths between 
source-destination pairs, i.e., every cell that intersects the straight
line joining a source-destination pair, relays the packets of that pair. The
generalization to handle arbitrary routing is presented subsequently.
\noindent{\bf{Scheduling Policy:}}
We first consider the scheduling policy proposed in \cite{gupta}, and obtain capacity results under this policy in Section
\ref{secpi1}. We refer to this
policy as $\pi_1$.
This scheduling policy uses a vertex-coloring algorithm to guarantee that each cell gets
a transmission opportunity at least once every $K$ slots, where $K$ is a constant that is
independent of $n$. In other words, the length of the schedule is bounded even as $n$ scales.
In Section \ref{secpi2}, we then propose a new improved scheduling policy, $\pi_2$ that yields
better capacity under our physical layer model. The length of the schedule under $\pi_2$, $K_n$ grows as 
a function of $n$,
%In \cite{gupta}, it was shown that such a scheduling strategy guarantees an SINR of at least
%$\beta$ for all the scheduled transmissions, and under the assumption that $\phi(\cdot)$ satisfies Eq.~\eqref{phi},
%the resultant per node capacity is $\Theta\left(\frac{1}{\sqrt{n\log{n}}}\right)$.

%The key assumption that distinguishes our work from the previous approaches is assumption A2.
%The remaining three assumptions are also used by Gupta and Kumar in \cite{gupta}.
%We later on relax assumptions A3 and A4, and also comment on relaxing assumption A2.
%Note that even with A3 and A4, the assumptions above are general enough to encompass a vast array of routing and scheduling policies
%that could be used for multi-hop communication. 
%For the tessellation considered in \cite{gupta}, $\rho_n$ is chosen to be the
%radius of a circle of area $\frac{100\log n}{n}$.

\section{{Capacity under Scheduling Policy $\pi_1$ and continuous $\phi(\cdot)$}}
\label{secpi1}
%In this section, we relax the assumption that $\phi(\cdot)$ is a step function as in Eq.~\ref{phi}, and instead 
Recall that we assume that $\phi(.)$ approaches unity continuously as the SINR goes to infinity.
%In \cite{gupta}, the limit of block length going to infinity is implicitly taken first so that each link has zero probability of
%packet loss even for a finite SINR
%(threshold-based $\phi(.)$). The limit of network size, $n$ going to infinity is taken subsequently, and the capacity of the network is
%then shown to be $\Theta\left(\frac{1}{\sqrt{n \log{n}}}\right)$.
%In this section, we assume that block lengths are fixed so that each
%link has a non-zero packet loss probability for every finite SINR value (continuous $\phi(.)$).
%The limit of $n$ going to infinity is taken first, and then the limit of block length going to infinity is taken.
Under this model, we show that the scheduling policy $\pi_1$ used in
\cite{gupta} results in a network capacity of $O\left(\frac{1}{n}\right)$.

Let $L_i$ be the line segment along the surface of $S^2$ that connects the $i$-th
source-destination pair (henceforth referred to as the $i$-th connection).
We also use $L_i$ to denote the length of the line segment
joining the $i$-th source-destination pair. As per the straight line routing scheme, the packets of the $i$-ith
connection are relayed hop-by-hop by every cell which intersects line $L_i$. Over each hop, any node
in the relaying cell may forward the packet. The scheduling algorithm, $\pi_1$, and the uniform cell sizes
ensure that communication between any two nodes in the neighboring cells is possible by guaranteeing
that the SINR at the receiving node is greater than or equal to $\beta$ (see \cite{gupta} for more
details). The following lemmas are presented without proofs.

%In the following, we prove that for each source-destination pair, there are at least  $\Theta\left(
%{\frac{1}{\rho_n}}\right)$ hops over which the SINR is
%upper bounded by a fixed constant.
%For this, we need the next three lemmas.
\begin{lemma}
\label{lemmaH}
For the straight line routing scheme, the number of hops $H_i$ for connection $i$ is
$\Theta\left( \frac{L_i}{\rho_n} \right)$. More precisely,
\begin{equation*}
\frac{1}{8} \frac{L_i}{\rho_n} \leq H_i \leq \frac{16}{\pi} \frac{L_i}{\rho_n}\mbox{.}
\end{equation*}
\end{lemma}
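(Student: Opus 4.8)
The plan is to bound $H_i$ by counting how many Voronoi cells the straight line $L_i$ can intersect, using the two-sided geometric control that each cell is sandwiched between two concentric circles of radii $\rho_n$ and $2\rho_n$. For the lower bound, I would observe that the line $L_i$, having length $L_i$ (measured along the surface of $S^2$), must be covered by the union of the cells it passes through. Since each such cell is contained in a circle of radius $2\rho_n$, its intersection with the line is a chord of length at most $4\rho_n$ (the diameter). Hence the number of cells meeting $L_i$, which is exactly $H_i+1$ up to a constant offset, is at least $L_i/(4\rho_n)$; absorbing the additive constant and being slightly loose gives $H_i \geq \tfrac{1}{8}\,L_i/\rho_n$. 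One subtlety here is that on a sphere ``straight line'' means a geodesic arc, so I would note that since $\rho_n = \Theta(\sqrt{\log n}) = o(\sqrt n)$ and the sphere has radius $\Theta(\sqrt n)$, the relevant neighborhoods are locally essentially flat, so planar chord/diameter estimates apply up to the constants already being generous.

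For the upper bound, the idea is a packing argument. Each cell that $L_i$ intersects contains a disk of radius $\rho_n$, and these inner disks are disjoint (they lie in distinct Voronoi cells). All of these inner disks lie within distance $2\rho_n$ of the line $L_i$ — more precisely, within a ``sausage'' neighborhood of $L_i$ of width proportional to $\rho_n$ extended by $\rho_n$ at the endpoints. The area of this sausage is $\Theta(L_i \rho_n + \rho_n^2)$, and since the disks are disjoint, the number of cells intersecting $L_i$ is at most (sausage area)$/(\pi \rho_n^2)$, which is $O(L_i/\rho_n)$ once $L_i \gg \rho_n$. Tracking the constants — width $\leq 4\rho_n$ on each side plus endpoint caps — and again using local flatness, this yields $H_i \leq \tfrac{16}{\pi}\,L_i/\rho_n$. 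Combining the two bounds gives $H_i = \Theta(L_i/\rho_n)$ as claimed.

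The main obstacle I anticipate is pinning down the explicit constants $\tfrac18$ and $\tfrac{16}{\pi}$ rather than merely proving $\Theta(\cdot)$: this requires being careful about (a) the exact geometric relation between a cell, its inscribed disk of radius $\rho_n$, and its circumscribing disk of radius $2\rho_n$; (b) the additive $+1$ (or small constant) discrepancy between ``number of cells hit'' and ``number of hops $H_i$''; and (c) the spherical-versus-planar correction, which must be shown to be lower-order. Since the lemma is stated in \cite{gupta} and reproduced here without proof, I would cite \cite{gupta} for the precise constant bookkeeping and present only the packing/covering skeleton above as the conceptual justification.
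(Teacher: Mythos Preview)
Your proposal is correct and matches the paper's proof closely: the upper bound via the strip/packing argument (all relaying cells lie entirely in a strip of half-width $4\rho_n$ about $L_i$, each cell has area at least $\pi\rho_n^2/2$, hence at most $16L_i/(\pi\rho_n)$ cells) is exactly what the paper does, and your lower bound via chord length is a minor variant of the paper's argument that each hop spans at most $8\rho_n$ because transmitter and receiver lie in adjacent cells of diameter at most $4\rho_n$. The spherical-versus-planar issue you flag is handled in the paper by Lemma~\ref{lemmaEXP2}, which shows that a disk of radius $\rho_n$ on $S^2$ has area at least $\pi\rho_n^2/2$, supplying the cell-area lower bound used in the packing step.
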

\begin{proof}
See Appendix B.
\end{proof}
%
%The following lemma shows the received signal strength is upper bounded for a fixed fraction of hops of each path.
\begin{lemma}
\label{lemmaHh}
Fix $t$ such that $0 < t < 1$. For connection $i$, out of $H_i$ total hops, let $h_i$ hops be such that each of
these hops covers a distance of less than $t \rho_n$. Then
\begin{equation*}
H_i - h_i \geq \frac{L_i}{\rho_n} \left(\frac{1 - \frac{16 t}{\pi}}{8 - t} \right).
\end{equation*}
Thus, for the above $H_i - h_i$ hops, the signal received at the receiver is at the most $P (t
\rho_n)^{-\alpha}$.
%where $P$ is the transmit power common to all the nodes, and $\alpha$ is the propagation loss exponent ($\alpha > 2$).
\end{lemma}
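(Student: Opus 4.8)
\emph{Proof strategy.} The plan is to combine a triangle‑inequality lower bound on the total length of the path for connection $i$ with the fact that no single hop can span more than the communication radius $8\rho_n$, and then to eliminate $h_i$ using the upper bound on $H_i$ supplied by Lemma~\ref{lemmaH}.

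First I would split the $H_i$ hops of connection $i$ into the $h_i$ ``short'' hops, each covering a distance less than $t\rho_n$, and the remaining $\ell_i := H_i - h_i$ ``long'' hops, each covering a distance of at least $t\rho_n$. Writing $X_{v_0},\dots,X_{v_{H_i}}$ for the positions of the relaying nodes along $L_i$ (with $v_0$ the source and $v_{H_i}$ the destination), the triangle inequality for geodesic distance on $S^2$ gives $\sum_{k=1}^{H_i} |X_{v_k}-X_{v_{k-1}}| \ge L_i$. Since consecutive relaying cells along $L_i$ share at least the point at which $L_i$ crosses from one to the next, and every such cell lies in a circle of radius $2\rho_n$, any node of one cell and any node of the next are within $8\rho_n$ of each other, so every hop spans at most $8\rho_n$. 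Grouping the sum by short and long hops then yields
\[
L_i \;\le\; h_i\, t\rho_n \;+\; \ell_i\, 8\rho_n .
\]

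Next I would substitute $h_i = H_i - \ell_i$ to rewrite this as $L_i \le H_i\, t\rho_n + \ell_i\,(8-t)\rho_n$, and invoke Lemma~\ref{lemmaH} in the form $H_i\, t\rho_n \le \frac{16 t}{\pi}L_i$. Dividing through by $(8-t)\rho_n>0$ (positive because $t<1$) gives
\[
H_i - h_i \;=\; \ell_i \;\ge\; \frac{L_i\bigl(1-\frac{16 t}{\pi}\bigr)}{(8-t)\rho_n} \;=\; \frac{L_i}{\rho_n}\biggl(\frac{1-\frac{16 t}{\pi}}{8-t}\biggr),
\]
which is exactly the claimed inequality (and holds trivially when the right‑hand side is non‑positive, since $\ell_i \ge 0$). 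For the last assertion, each of these $\ell_i$ long hops has transmitter--receiver separation at least $t\rho_n$, so by the path‑loss model in \eqref{model} the received signal power is $P|X_i-X_j|^{-\alpha}\le P(t\rho_n)^{-\alpha}$.

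I do not anticipate a genuine obstacle: once $h_i$ is eliminated through $H_i = h_i + \ell_i$ rather than through the crude bound $\ell_i\le H_i$ (which would give the weaker denominator $8$ instead of $8-t$), the computation is elementary. The one step that needs care is the claim that every hop spans at most $8\rho_n$; this relies on the specific Voronoi tessellation and the $8\rho_n$ communication radius fixed in the routing setup, not on Lemma~\ref{lemmaH}, and it is precisely what converts the hop count $\ell_i$ into the distance inequality used above.
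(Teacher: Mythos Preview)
Your proposal is correct and follows essentially the same route as the paper: both arguments start from the inequality $L_i \le h_i\, t\rho_n + (H_i - h_i)\,8\rho_n$ (total path length is at least $L_i$, short hops contribute less than $t\rho_n$, all hops at most $8\rho_n$) and then invoke the upper bound $H_i \le \frac{16}{\pi}\frac{L_i}{\rho_n}$ from Lemma~\ref{lemmaH} to eliminate $H_i$. The only difference is cosmetic algebra---the paper first isolates $h_i$ and then forms $H_i - h_i$, whereas you substitute $h_i = H_i - \ell_i$ directly---but the ingredients and the final bound are identical.
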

\begin{proof}
See Appendix B.
\end{proof}

In \cite{gupta}, 
scheduling policy $\pi_1$ was proposed to guarantee an SINR of at least $\beta$ at the receiver of
every scheduled transmission. 
This scheduling policy corresponds to a graph coloring problem, and it was
shown that the maximum number of colors required to color all the cells 
is upper bounded by $1 + c_1$, where $c_1$ is a fixed constant that is independent of $n$ (see Lemma 4.4 in \cite{gupta}).
Using this scheduling scheme, each cell gets a transmission opportunity at least once every $1 + c_1$
time slots.
The following lemma shows that under such a scheduling strategy,
except for a small fraction of hops, all the
remaining hops of a connection receive a certain minimum amount of interference from other
simultaneous transmissions.
\begin{lemma}
\label{lemmaM}
Fix $M>9$.
Let $N_i$ be the number of hops of connection $i$ such that
there is no simultaneous interfering transmission within a circle of radius $(M+8) \rho_n$ around
the receivers of those hops. Then,
\begin{equation*}
N_i \leq \frac{L_i}{\rho_n} \left( \frac{2(1 + c_1)}{M}\right)\mbox{,}
\end{equation*}
where $c_1$ is the constant from Lemma 4.4 in \cite{gupta}.
\end{lemma}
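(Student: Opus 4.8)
The plan is to exploit the rigidity of $\pi_1$. Under the vertex colouring of \cite{gupta} each cell is assigned one of at most $1+c_1$ colours, in each slot exactly one colour class is activated (so all cells of a common colour transmit in the same slots), and two cells of the same colour are never adjacent in the interference graph. The observation I would build on is that the colour of the slot carrying hop $j$ of connection $i$ is \emph{determined by the route}, not by time: that hop is always transmitted by the cell $A_j$ that immediately precedes, along $L_i$, the cell $B_j$ containing the receiver $r_j$, hence it occurs in a slot of colour $\mathrm{col}(A_j)$, and in every such slot \emph{every} cell of colour $\mathrm{col}(A_j)$ transmits (we assume, as usual, that every scheduled cell has a packet to send). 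Consequently, if two distinct hops $j$ and $j'$ of connection $i$ satisfy $\mathrm{col}(A_j)=\mathrm{col}(A_{j'})$, then at the moment hop $j$ is carried the cell $A_{j'}$ is transmitting as well; since the straight-line route visits each cell at most once we have $A_{j'}\neq A_j$, so that transmission is a genuine simultaneous interferer seen by $r_j$.

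I would then turn this into a spatial-separation statement. Cut $L_i$ into consecutive sub-arcs of length $\frac{M}{2}\rho_n$, so there are at most $\frac{2L_i}{M\rho_n}+1$ of them, and fix one such sub-arc $I$. If the points at which hops $j$ and $j'$ cross from their transmitting cell into their receiving cell both lie in $I$, then these two crossing points are at most $\frac{M}{2}\rho_n$ apart; since each Voronoi cell has diameter at most $4\rho_n$, and $B_j$ (which contains $r_j$) as well as $A_{j'}$ both meet $I$, every node of $A_{j'}$ lies within $\frac{M}{2}\rho_n+4\rho_n+4\rho_n\le(M+8)\rho_n$ of $r_j$. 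Combined with the previous paragraph this gives: if hop $j$ is one of the $N_i$ hops around whose receiver there is \emph{no} simultaneous interferer within distance $(M+8)\rho_n$, then no other such ``isolated'' hop $j'$ whose crossing point lies in $I$ may carry the colour $\mathrm{col}(A_j)$ --- otherwise $A_{j'}$ would be exactly the forbidden interferer. Hence the sub-arc $I$ contains the crossing points of at most $1+c_1$ of the $N_i$ isolated hops, one per colour.

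Summing over all sub-arcs gives $N_i\le(1+c_1)\bigl(\frac{2L_i}{M\rho_n}+1\bigr)$; the additive $1+c_1$ is of lower order than $\frac{L_i}{\rho_n}$, and together with the slack already present in the hypothesis $M>9$ and in the factor $2$ it comfortably absorbs both this rounding and the $O(\rho_n)$ terms in the distance estimate, leaving the claimed $N_i\le\frac{L_i}{\rho_n}\cdot\frac{2(1+c_1)}{M}$. The combinatorial core --- a pigeonhole over the at most $1+c_1$ colours inside each window of length $\Theta(M\rho_n)$ --- is short once the ``route fixes the slot colour'' observation is in hand; I expect the only delicate part to be the geometric bookkeeping: checking that a cell of colour $\mathrm{col}(A_j)$ sitting near $r_j$ really lands inside the radius-$(M+8)\rho_n$ disc, and really is transmitting at the instant hop $j$ fires.
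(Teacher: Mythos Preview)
Your argument is correct and is essentially the paper's own proof: both split by colour, observe that two ``isolated'' same-colour hops cannot have their transmitting cells closer than $\Theta(M\rho_n)$ along $L_i$ (else one cell would be the forbidden simultaneous interferer for the other), and then count. The only cosmetic difference is that the paper phrases the count as a packing of disjoint chords of length $\ge \tfrac{M\rho_n}{2}$ cut out of $L_i$ by circles of radius $\tfrac{M\rho_n}{2}$ around the isolated cells, whereas you partition $L_i$ into windows of that length and pigeonhole --- this leaves you with an additive $(1+c_1)$ that the chord packing avoids, but as you note the slack in $\tfrac{M}{2}+8\le M+8$ (equivalently, doubling the window length) already recovers the clean bound for $L_i/\rho_n\ge M$.
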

\begin{proof}
See Appendix B.
\end{proof}

Using the above lemma, we have the following important result.
\begin{proposition}
\label{propphi}
There exist fixed constants $t_0$ and $M_0$, that do not depend on $n$, such that for at least
$L_i/16\rho_n$ hops of connection $i$, the SINR is less than a fixed constant $\beta_0$ where
\begin{equation*}
\beta_0    = \left( \frac{M_0 + 8}{t_0} \right)^{\alpha}.
\end{equation*}
Since $\phi(\cdot)$ is continuous, and the SINR is upper bounded by a fixed constant $\beta_0$, the
probability of successful packet reception for these hops is also upper bounded by a fixed
constant $\phi(\beta_0) < 1$.
\end{proposition}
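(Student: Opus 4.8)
The plan is to obtain the ``bad SINR'' hops as the intersection of the two families of hops already controlled by Lemma~\ref{lemmaHh} and Lemma~\ref{lemmaM}, and then to choose $t_0$ and $M_0$ so that this intersection still contains a positive constant fraction of $L_i/\rho_n$ hops. Call a hop of connection $i$ \emph{long} if it spans a distance of at least $t_0\rho_n$, and \emph{isolated} if there is no simultaneously interfering transmitter within distance $(M_0+8)\rho_n$ of its receiver. If a hop is long and \emph{not} isolated, then by the last sentence of Lemma~\ref{lemmaHh} its received signal power is at most $P(t_0\rho_n)^{-\alpha}$, while the single nearby interferer forces the total interference power to be at least $P((M_0+8)\rho_n)^{-\alpha}$ (recall the noise term $N$ has been dropped), so its SINR is at most $\left(\frac{M_0+8}{t_0}\right)^{\alpha}=\beta_0$.

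Next I would count such hops by a union bound on the complementary events. Writing $h_i$ for the number of non-long hops and $N_i$ for the number of isolated hops, the number of hops that are simultaneously long and non-isolated is at least $H_i - h_i - N_i$. Lemma~\ref{lemmaHh} gives $H_i - h_i \geq \frac{L_i}{\rho_n}\cdot\frac{1-16t_0/\pi}{8-t_0}$, and Lemma~\ref{lemmaM} (which applies once $M_0>9$) gives $N_i \leq \frac{L_i}{\rho_n}\cdot\frac{2(1+c_1)}{M_0}$, so the count is at least
\begin{equation*}
\frac{L_i}{\rho_n}\left( \frac{1-16t_0/\pi}{8-t_0} - \frac{2(1+c_1)}{M_0} \right).
\end{equation*}

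It then remains to choose the constants. As $t\to 0$ the first term tends to $\tfrac18 > \tfrac1{16}$, so I fix $t_0\in(0,1)$ small enough that $\frac{1-16t_0/\pi}{8-t_0}\geq \tfrac{3}{32}$, and then fix $M_0$ large enough that simultaneously $M_0>9$ and $\frac{2(1+c_1)}{M_0}\leq \tfrac{1}{32}$ (e.g. $M_0=\max\{10,\,64(1+c_1)\}$); both are lower bounds on $M_0$, hence compatible, and since $c_1$ is the fixed constant of Lemma~4.4 of \cite{gupta}, neither $t_0$ nor $M_0$ depends on $n$. With these choices the displayed lower bound is at least $\frac{L_i}{\rho_n}\left(\tfrac{3}{32}-\tfrac{1}{32}\right)=\frac{L_i}{16\rho_n}$, giving the claimed number of hops, each with SINR at most $\beta_0=\left(\frac{M_0+8}{t_0}\right)^{\alpha}$. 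Finally, since $\phi(\cdot)$ is increasing and strictly below $1$ at every finite argument, $\text{SINR}\le\beta_0$ yields $\phi(\text{SINR})\le\phi(\beta_0)<1$ on all of these hops.

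The proof is essentially bookkeeping layered on the three lemmas; the one point I expect to be the real content is checking the \emph{feasibility} of the constant choices — namely that the slack $\tfrac18-\tfrac1{16}$ available from Lemma~\ref{lemmaHh} at $t=0$ is wide enough to absorb both the perturbation incurred by taking $t_0>0$ and the interference-deficit term $\tfrac{2(1+c_1)}{M_0}$ from Lemma~\ref{lemmaM}, while still leaving $M_0>9$. The arithmetic above shows this holds with room to spare, so no asymptotic-in-$n$ considerations enter and the constants are genuinely absolute.
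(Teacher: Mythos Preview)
Your proof is correct and follows essentially the same approach as the paper: the paper too defines the set $A_i$ of ``long'' hops and the set $B_i$ of ``isolated'' hops, bounds $|A_i\cap B_i^c|\ge |A_i|-|B_i|=(H_i-h_i)-N_i$ via Lemmas~\ref{lemmaHh} and~\ref{lemmaM}, and then chooses $\epsilon_1=\epsilon_2=\tfrac{1}{32}$ (i.e., exactly your thresholds $\tfrac{3}{32}$ and $\tfrac{1}{32}$) to obtain the $L_i/16\rho_n$ count and the SINR bound $\beta_0$. The only difference is cosmetic: the paper phrases the choice of $t_0,M_0$ via the intermediate $\epsilon$'s rather than working directly with the explicit expressions as you do.
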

\begin{proof}
See Appendix B.
\end{proof}

As the number of hops between source and destination scales to infinity, and as per the above result,
a fixed fraction of those hops have a certain minimum non-zero packet loss probability. Hence, we can show that
the expected end-to-end throughput of a node after accounting for the packet losses is then given
by the following result.
\begin{proposition}
The expected end-to-end throughput of each node, ${\bf E} [\Lambda_n]$ is upper bounded by
\begin{equation}
{\bf E} [\Lambda_n ] < \lambda_n c_{0} \frac{\log{n}}{n}, %\frac{1024\pi{\rho_n}^2 }{(\log {\phi(\beta_0)})^2},
\label{Lambdainter}
\end{equation}
where $\lambda_n$ is the rate at which each node injects packets into the network,
and $\rho_n$ is the size of the cell (each cell in contained in a circle of radius $2\rho_n$,
and contains a circle of radius $\rho_n$).
\label{propinter}
\end{proposition}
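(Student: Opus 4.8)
The plan is to convert the ``bad‑hop'' count of Proposition~\ref{propphi} into an end‑to‑end delivery probability and then average over the random source--destination assignment. Fix connection $i$. Proposition~\ref{propphi} guarantees (on the high‑probability event that the connectivity construction of \cite{guptaconn} succeeds) that at least $L_i/(16\rho_n)$ of its hops have $\text{SINR}<\beta_0$, hence per‑transmission success probability at most $\phi(\beta_0)<1$. By the interference‑independence assumption the loss events on distinct hops are independent, so a packet injected by the source of connection $i$ reaches its destination with conditional probability at most $\phi(\beta_0)^{L_i/(16\rho_n)}$. If up to $r$ retransmissions per hop are permitted, with $r$ fixed in $n$, the per‑packet survival probability over a bad hop is at most $1-(1-\phi(\beta_0))^{r+1}$, still a constant $q_0<1$ (take $q_0=\phi(\beta_0)$ when $r=0$); the rest of the argument goes through verbatim with $q_0$ in place of $\phi(\beta_0)$.

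Next I would insert the two scaling facts already in hand. Since $\rho_n=\Theta(\sqrt{\log n})$, we have $L_i/(16\rho_n)=a_n D_i$ with $a_n:=\sqrt n/(16\rho_n)=\Theta(\sqrt{n/\log n})\to\infty$, where $L_i=\sqrt n\,D_i$ and $D_i$ is the geodesic distance on the unit‑area sphere between the (essentially uniform) endpoints of connection $i$. A spherical‑cap estimate gives the small‑ball bound $\Pr[D_i\le\epsilon]\le\pi\epsilon^2$ for all $\epsilon>0$. Combining, $\Pr[\text{connection-}i\text{ packet delivered}\mid D_i]\le q_0^{a_n D_i}$.

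Since a node injects at rate $\lambda_n$ and, by symmetry, every node sees the same statistics, ${\bf E}[\Lambda_n]=\lambda_n\,\Pr[\text{a packet is delivered}]=\lambda_n\,{\bf E}\big[q_0^{a_n D_i}\big]$, where the expectation is over the packet losses, the random destination, and the node placement. Using the layer‑cake formula and the small‑ball bound,
\begin{equation*}
{\bf E}\big[q_0^{a_n D_i}\big]\le\frac{\pi}{a_n^2\big(\ln(1/q_0)\big)^2}\int_0^1\big(\ln(1/s)\big)^2\,ds=\frac{2\pi}{a_n^2\big(\ln(1/q_0)\big)^2}=\Theta\!\left(\frac{\log n}{n}\right),
\end{equation*}
and the low‑probability failure of the connectivity construction contributes only $\lambda_n\cdot o(1/n)$. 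Hence ${\bf E}[\Lambda_n]<\lambda_n c_0\,\log n/n$ for a suitable constant $c_0$ and all large $n$.

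I expect the averaging step to be the crux: the per‑connection bound alone only yields delivery probability $O\big(\sqrt{\log n/n}\big)$ for a ``typical'' long connection (as in the motivating discussion of Section~\ref{approach}), and it is precisely the \emph{quadratic} small‑ball estimate $\Pr[D_i\le\epsilon]=O(\epsilon^2)$ --- a consequence of the two‑dimensionality of $S^2$ --- that upgrades this to the $O(\log n/n)$ in the statement. The only care needed there is to confirm that this tail estimate (hence the square) is not destroyed by the discreteness of the $n$‑point configuration, i.e., that the uniformly chosen destination node is, for these purposes, close enough to a uniform point of $S^2$; everything else (Proposition~\ref{propphi}, hop‑loss independence, and $\rho_n=\Theta(\sqrt{\log n})$) is already available.
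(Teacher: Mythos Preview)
Your argument is correct and follows the paper's proof almost step for step: Proposition~\ref{propphi} to count bad hops, hop-wise independence to obtain the conditional delivery bound $\phi(\beta_0)^{L_i/(16\rho_n)}$, and finally averaging over the random source--destination distance using $\rho_n=\Theta(\sqrt{\log n})$.

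The only substantive difference is how the averaging is carried out. The paper computes the exact density of the geodesic distance between two uniform points on the sphere of area $n$ (Lemma~\ref{lemmaEXP}) and integrates $\delta^{L}$ against it in closed form, obtaining
\[
{\bf E}_L[\delta^{L}] \;=\; \frac{2\pi\bigl(1+\delta^{\sqrt{\pi n}/2}\bigr)}{4\pi+n(\log\delta)^2},
\]
from which the $\rho_n^2/n$ bound is read off directly after substituting $\delta=\phi(\beta_0)^{1/(16\rho_n)}$. Your layer-cake route via the small-ball estimate $\Pr[D_i\le\epsilon]\le\pi\epsilon^2$ is slightly less sharp in the constant but more robust: it isolates precisely the geometric input that matters (quadratic area growth of small caps, i.e.\ two-dimensionality) and would go through unchanged on any surface with that property, whereas the paper's computation is specific to the round sphere. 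Your closing worry about discreteness is in fact a non-issue: since the $n$ nodes are i.i.d.\ uniform on $S^2$ and the destination is a uniformly chosen other node, the source--destination pair is exactly a pair of independent uniform points, so the spherical-cap bound applies without any approximation.
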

\begin{proof}
See Appendix B.
\end{proof}
%Let us assume that time is slotted such that each slot is long enough to transmit a single packet of fixed length.
%So far, i.e., in proving Lemmas \ref{lemmaH}, \ref{lemmaHh} and \ref{lemmaM}, Proposition
%\ref{propphi}, and Proposition \ref{propinter} we have not made any assumptions about the exact form of $\rho_n$.
%We just needed $\rho_n$ to be small enough to ensure connectivity.
%Let us now study the per-node throughput bound for 
%the choice of $\rho_n$ in \cite{gupta}, i.e., when $\rho_n$ is chosen to be the radius of a disk
%of area $\frac{100 \log{n}}{n}$.
When $n$ nodes are distributed randomly and uniformly over a region, and the region consists of disjoint convex cells
(each of area at least $100\log{n}$) such
that the probability of a node falling in a given cell is at least $\frac{\log{n}}{n}$, then with high probability,
the number of nodes in each cell is at least $50\log{n}$ (see Lemma 4.8, \cite{gupta}).
More precisely, there exists a sequence $\delta_n \rightarrow 0$, such that
\begin{align*}
\mbox{Prob} &\left\{\mbox{Number of nodes in cell $V$ $\geq$ $50\log n$, for every}\right.\\
&\left.\mbox{cell $V$ in the tessellation}\right\} > 1 - \delta_n.
\end{align*}
Thus we have the following Lemma for the above choice of $\rho_n$.
\begin{lemma}
\label{lemma50logn}
If $\lambda_n $ is the rate in packets/slot at which {\em every}\, source node injects packets in the network under scheduling
policy $\pi_1$, then with high probability,
\begin{equation*}
\lambda_n \leq \frac{1}{50 \log n}.
\end{equation*}
\end{lemma}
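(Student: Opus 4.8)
The plan is to combine the node-count bound from \cite{gupta} (the ``$50\log n$ nodes per cell'' statement quoted just above) with an elementary load/flow-conservation argument at the level of a single cell. First I would condition on the high-probability event $E_n=\{\text{every cell of the tessellation contains at least } 50\log n\text{ nodes}\}$; by the quoted Lemma~4.8 of \cite{gupta} we have $\mathrm{Prob}(E_n)>1-\delta_n$ with $\delta_n\to 0$, which is exactly the ``with high probability'' in the statement. Everything else in the argument is deterministic given $E_n$.

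Next, fix an arbitrary cell $V$ and bound the traffic it must serve against the transmission opportunities it is granted. On $E_n$, $V$ contains at least $50\log n$ nodes, and in the Gupta--Kumar setup each of these is the source of its own connection; hence the nodes of $V$ inject at least $50\log n\cdot\lambda_n$ fresh packets per slot, and every such packet must be transmitted at least once by a node of $V$ (its very first hop, sent by the source itself) in order to leave $V$. On the service side, under $\pi_1$ a cell hosts at most one transmission of one packet in any slot (when it is the cell's turn in the vertex-coloring schedule, a single node in the cell forwards a single packet), so over any horizon of $T$ slots $V$ performs at most $T$ transmissions. Since $\lambda_n$ is, by hypothesis, a rate the network actually carries, over a long horizon the number of transmissions performed by $V$ must be at least the number of fresh packets injected by its own sources; dividing by $T$ gives $50\log n\cdot\lambda_n\le 1$, i.e. $\lambda_n\le \frac{1}{50\log n}$ on $E_n$, which is the claim. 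Relayed traffic through $V$ only makes the inequality tighter and is harmless to drop for this upper bound.

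The delicate points — and the main, though modest, obstacle — are to justify that policy $\pi_1$ really does confine each cell to a single transmission per slot (so the per-cell per-slot service budget is one packet), and to make precise the sense in which $\lambda_n$ being an achievable/sustainable throughput licenses the cell-level inequality ``transmissions out $\ge$ fresh packets in''. Once these are nailed down the bound is immediate; in fact it does not even require the schedule-length constant $K=1+c_1$, since the crude count of at most one transmission per cell per slot already suffices.
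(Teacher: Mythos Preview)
Your proposal is correct and follows essentially the same approach as the paper: condition on the high-probability event that every cell contains at least $50\log n$ nodes, then use the fact that under $\pi_1$ at most one node per cell transmits per slot, so each source node can inject at most one packet every $50\log n$ slots. The paper's proof is a terse one-paragraph version of exactly this argument; your write-up is simply more explicit about the flow-conservation step and about which facts concerning $\pi_1$ are actually being used (and you are right that the constant $1+c_1$ is mentioned but not needed for the bound).
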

\vspace{0.2cm}

\begin{proof}
The scheduling algorithm proposed in \cite{gupta} guarantees each cell a time slot at least once every $1+ c_1$ slots. However,
since each cell contains at least $50 \log n$ nodes with high probability, even if each node were to
transmit only its own packets, and not relay packets, it would still get no more than one
transmission opportunity every $50 \log n$ slots.
Hence, the rate at which a node injects its own packets into the network can never be more than
$1/50\log{n}$.
\end{proof}
\vspace{0.2cm}

Using the above Lemma to bound $\lambda_n $ in \eqref{Lambdainter}, and for an appropriately defined constant
$c_1$, we get
\begin{equation}
{\bf E} [\Lambda_n] \leq \frac{c_1}{n}.%  \frac{1}{50 \log n} \frac{1024\pi{\rho_n}^2 }{(\log {\phi(\beta_0)})^2}
\label{Lambdalast}
\end{equation}
%Also, since $\rho_n$ is the radius of a circle of area $\frac{100 \log{n}}{n}$, using Lemma \ref{lemmaEXP2} in Appendix we have,
%\begin{align*}
%\frac{\pi {\rho_n}^2}{2} &< \frac{100\log n}{n}
%\end{align*}
%Substituting the above in \eqref{Lambdalast},
%\begin{equation}
%{\bf E} [\Lambda_n (i) ] 
%          = \frac{2^{12}}{(\log {\phi(\beta_0)})^2} \frac{1}{n}
%					\label{final}
%\end{equation}
Thus we have proved the following main result.
\begin{proposition}
\label{propmain1}
Under scheduling policy $\pi_1$, continuous $\phi(\cdot)$, and with $\rho_n$ chosen to be the
radius of a disk of area $100 \log{n}$, the per-node throughput that can be achieved
is $O(\frac{1}{n})$.
% instead of $\Theta\left(\frac{1}{\sqrt{n \log{n}}}\right)$.
%More precisely,
%\begin{equation*}
%{\bf E} [\Lambda_n ] \leq \frac{c_1}{n}\mbox{, }
%\end{equation*}
%where $c_1$ is a constant that does not depend on $n$.
\end{proposition}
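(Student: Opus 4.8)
The plan is to obtain this as a corollary of Proposition~\ref{propinter} and Lemma~\ref{lemma50logn}: almost all of the work is already done upstream, and what remains is to chain the two bounds and to be precise about what ``the per-node throughput that can be achieved'' means. I would start by recalling that for any routing/scheduling pair of the class considered here the delivered per-node end-to-end throughput $\Lambda_n$ is the product of the injection rate $\lambda_n$ and an end-to-end packet delivery probability, and that Proposition~\ref{propinter} has already folded the effect of the lossy links into the single inequality $\mathbf{E}[\Lambda_n] < \lambda_n c_0\frac{\log n}{n}$ of \eqref{Lambdainter}. Hence the only missing ingredient is an upper bound on $\lambda_n$ under $\pi_1$.

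That ingredient is Lemma~\ref{lemma50logn}: with high probability every Voronoi cell contains at least $50\log n$ nodes, all of which must share the cell's single transmission opportunity per slot, so no node can source its own packets faster than once every $50\log n$ slots, i.e.\ $\lambda_n \le \frac{1}{50\log n}$. Substituting this into \eqref{Lambdainter} collapses the $\log n$ factors and gives $\mathbf{E}[\Lambda_n] < \frac{c_0}{50}\cdot\frac{1}{n}$, which is \eqref{Lambdalast}; thus $\mathbf{E}[\Lambda_n]=O(1/n)$.

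The final step converts this expectation bound into the statement about achievable throughput. I would argue the contrapositive: if some scheme in the class delivered a per-node rate $g(n)=\omega(1/n)$ with probability tending to one, then $\mathbf{E}[\Lambda_n]\ge(1-o(1))\,g(n)=\omega(1/n)$, contradicting $\mathbf{E}[\Lambda_n]=O(1/n)$; hence no such $g(n)$ exists (Markov's inequality applied to $\Lambda_n$ gives the same conclusion). A little care is needed because Lemma~\ref{lemma50logn} holds only on a high-probability event $E_n$: one writes $\mathbf{E}[\Lambda_n]=\mathbf{E}[\Lambda_n\mathbf{1}_{E_n}]+\mathbf{E}[\Lambda_n\mathbf{1}_{E_n^{c}}]$, bounds the first term by $\frac{c_0}{50n}$ via \eqref{Lambdainter}, and the second by $\Pr[E_n^{c}]$ since $\lambda_n\le1$ always, the latter being $O(1/n)$ by the Chernoff-type estimate behind Lemma~4.8 of \cite{gupta}.

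The real obstacle is not in this proof but in what precedes it: the content lies in Proposition~\ref{propphi}, which forces a constant fraction of the $\Theta(L_i/\rho_n)$ hops of connection $i$ to have SINR at most a fixed $\beta_0$ and hence per-hop success probability at most $\phi(\beta_0)<1$, and in the across-hops independence assumption; together these drive the end-to-end delivery probability down enough to yield the $\frac{\log n}{n}$ factor in Proposition~\ref{propinter}. Granting those, the present proposition is a one-line computation, and the only pitfall is conflating the injection rate $\lambda_n$ with the delivered throughput $\Lambda_n$.
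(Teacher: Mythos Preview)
Your proposal is correct and follows essentially the same route as the paper: combine Proposition~\ref{propinter} with Lemma~\ref{lemma50logn} to cancel the $\log n$ factor and obtain \eqref{Lambdalast}. The paper does not spell out the conversion from the expectation bound to a high-probability achievability statement, nor the conditioning on the event $E_n$ where Lemma~\ref{lemma50logn} holds; your handling of these points adds rigor but does not constitute a different approach.
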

%\noindent{\it{Outline of the proof:}}
%The proof consists of two parts. In the first part, we show that
%under scheduling policy $\pi_1$, the maximum
%rate, $\lambda_n$,  at which a node can inject packets into the network is upper bounded by $\frac{1}{50\log{n}}$.
%This result follows from the fact that each each cell contains at least $50 \log{n}$ nodes with high probability.
%In the second part of the proof, we show that the end-to-end probability of packet delivery for any connection
%decays as $\frac{\log{n}}{n}$. These two results imply that when intermediate packet losses are accounted for,
%the destination node will receive packets at a rate of $O\left(\frac{1}{n}\right)$. In the following, we provide
%details of the second part of the proof.
\begin{proof}
See Appendix B.
\end{proof}

%{\bf {Remark 1:}} Note that the above analysis assumed that the node density is constant and the number of nodes $n$
%scales to infinity, while the original analysis of Gupta and Kumar in \cite{gupta} was based on the assumption that
%the area is fixed, and the node density goes to infinity with $n$. The result of $O\left(\frac{1}{n}\right)$ is
%applicable even under the latter scaling model, and the relevant proofs can be found in our earlier work \cite{isit}.

{\bf {Remark:}} The above bound on the throughput also holds for a broad class of
routing schemes, i.e., Proposition \ref{propmain1} holds even when the assumption of straight-line routing is relaxed,
(see Appendix A for details).

\section{Capacity under a new Scheduling Policy $\pi_2$, and continuous $\phi(\cdot)$}
\label{secpi2}
%In the previous section, we saw that even when the straight line routing assumption was relaxed, 
%for a broad range of routing schemes the achievable per-node throughput is 
%$O\left(\frac{1}{n}\right)$ due to an asymptotically high cumulative packet loss probability. 
In this section, we show that if we
reduce the extent of spatial reuse via scheduling, then we can bound the end-to-end packet
loss probability, {\em{not just}}
the per link packet loss probability. This restricts the throughput reduction
due to cumulative packet loss.
%The proofs for the following lemmas can be found in \cite{techrep}.
%We define a new scheduling policy $\pi_2$, where we assume that the
%total number of colors that are used during scheduling is $K_n$, i.e., 
%the schedule length is not constant as in $\pi_1$ \cite{gupta}, but it scales
%with $n$.

\begin{lemma}
For a given $\epsilon>0$, if the packet loss probability over each link is upper bounded by
$\frac{\sqrt{\pi}\rho_n \epsilon}{8\sqrt{n}}$,
then under straight line routing policy, the cumulative packet loss probability for
each connection is upper bounded by $\epsilon$.
\label{unionbnd}
\end{lemma}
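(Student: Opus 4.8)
The plan is to reduce the end-to-end loss event to a union of per-hop loss events and then control the number of hops purely from the geometry of the sphere. First I would let $A_j$ denote the event that the packet is dropped on the $j$-th hop of connection $i$, with $p_j := \mathrm{Prob}(A_j) \le p_{\max} := \frac{\sqrt{\pi}\rho_n\epsilon}{8\sqrt{n}}$ by hypothesis. The cumulative loss event is $\bigcup_{j=1}^{H_i} A_j$, so the union bound gives that the cumulative loss probability is at most $\sum_{j=1}^{H_i} p_j \le H_i\, p_{\max}$ (equivalently, under the interference-independence assumption one could write $1-\prod_{j}(1-p_j) \le \sum_j p_j$, but no independence is actually needed for this step). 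It therefore suffices to show $H_i\, p_{\max} \le \epsilon$, i.e., $H_i \le \frac{8\sqrt{n}}{\sqrt{\pi}\,\rho_n}$.

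Next I would bound the hop count uniformly over all connections. Lemma \ref{lemmaH} already yields $H_i \le \frac{16}{\pi}\frac{L_i}{\rho_n}$, so what remains is to bound the geodesic length $L_i$ of the $i$-th source--destination segment by the diameter of $S^2$. Since $S^2$ has area $n$, its radius $r$ satisfies $4\pi r^2 = n$, hence $r = \sqrt{n/(4\pi)}$, and the largest possible surface distance between two points (half of a great circle) is $\pi r = \frac{\sqrt{\pi n}}{2}$. Thus $L_i \le \frac{\sqrt{\pi n}}{2}$ for every connection $i$, including the longest one.

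Combining these gives $H_i \le \frac{16}{\pi}\cdot\frac{1}{\rho_n}\cdot\frac{\sqrt{\pi n}}{2} = \frac{8\sqrt{n}}{\sqrt{\pi}\,\rho_n}$, and therefore the cumulative loss probability is at most $\frac{8\sqrt{n}}{\sqrt{\pi}\,\rho_n}\cdot\frac{\sqrt{\pi}\rho_n\epsilon}{8\sqrt{n}} = \epsilon$, which is precisely the claim; in fact the stated per-link threshold $\frac{\sqrt{\pi}\rho_n\epsilon}{8\sqrt{n}}$ has simply been reverse-engineered so that this chain closes with equality.

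I do not expect a genuine obstacle here. The only points that need a little care are pinning down the correct constant in the sphere-diameter estimate, and being clear that one must invoke the \emph{worst-case} upper bound $H_i = \Theta(L_i/\rho_n)$ with $L_i$ at its maximum (rather than the typical $\Theta(\sqrt{n/\log n})$ hop count of a random connection), since the lemma is asserted for \emph{every} connection. Everything else is a one-line union bound followed by substitution.
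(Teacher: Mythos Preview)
Your proposal is correct and follows essentially the same argument as the paper: a union bound over hops, the hop-count estimate $H_i \le \frac{16}{\pi}\frac{L_i}{\rho_n}$ from Lemma~\ref{lemmaH}, and the geodesic-diameter bound $L_i \le \frac{\sqrt{\pi n}}{2}$ on the sphere of area $n$. The paper's proof is line-for-line the same chain of inequalities you wrote down.
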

\begin{proof}
See Appendix B for proof.
\end{proof}
Note that the above implies that as $n$ scales, in order to counter cumulative
packet loss, all the links have to be progressively more reliable.
We assume that $\rho_n=10\sqrt{\log{n}}$ so that
connectivity is achieved with high probability \cite{gupta}.

\begin{lemma}
For a given $\epsilon>0$, if a scheduling policy ensures that each transmission has an SINR 
of at least {{{$\beta_n\geq {\log{n}}$}}}, and if the network is large enough, i.e., 
$n > \frac{4}{5\sqrt{\pi}\epsilon}$, then each connection has an end-to-end cumulative
packet loss probability of no more than $\epsilon$.
\label{schedreq}
\end{lemma}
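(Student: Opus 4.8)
The plan is to deduce Lemma~\ref{schedreq} directly from Lemma~\ref{unionbnd}: all that is needed is to convert the SINR guarantee into the per-link loss budget that Lemma~\ref{unionbnd} requires. Since $\phi(\cdot)$ is increasing and the scheduler guarantees $\text{SINR}\ge\beta_n\ge\log n$ on every scheduled link, the loss probability on each link is at most $1-\phi(\beta_n)\le 1-\phi(\log n)$. Substituting the connectivity choice $\rho_n=10\sqrt{\log n}$, the threshold in Lemma~\ref{unionbnd} becomes
\[
\frac{\sqrt{\pi}\,\rho_n\,\epsilon}{8\sqrt{n}}=\frac{5\sqrt{\pi}\,\epsilon}{4}\sqrt{\frac{\log n}{n}},
\]
so it suffices to show $1-\phi(\log n)\le\frac{5\sqrt{\pi}\epsilon}{4}\sqrt{\log n/n}$ whenever $n>\frac{4}{5\sqrt{\pi}\epsilon}$.

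The first step is to invoke the tail behaviour of $\phi$ from the physical-layer model: for a fixed modulation/coding scheme and a fixed block length the packet error probability decays quickly once the SINR exceeds the operating point, so forcing $\text{SINR}\ge\log n$ brings the per-link loss down to $1-\phi(\log n)\le n^{-c}$ for a suitable constant $c$ (for the stated threshold on $n$ to be the right one, it is enough that $c\ge 3/2$). The second step is an elementary estimate: for $n>\frac{4}{5\sqrt{\pi}\epsilon}$, and with $n$ at least a small absolute constant so that $\log n\ge 1$,
\[
n^{-c}\le n^{-3/2}=\frac{1}{\sqrt{n}}\cdot\frac{1}{n}<\frac{5\sqrt{\pi}\epsilon}{4}\cdot\frac{1}{\sqrt{n}}\le\frac{5\sqrt{\pi}\epsilon}{4}\sqrt{\frac{\log n}{n}} .
\]
Hence every link satisfies the loss bound of Lemma~\ref{unionbnd}, and that lemma yields an end-to-end cumulative loss of at most $\epsilon$ for every connection.

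The substantive ingredient is the first step --- that $1-\phi$ decays fast enough at $\beta_n=\log n$. This is the only place the argument can break, and the only place where the assumed shape of $\phi$ (beyond continuity and $\phi\to 1$) is used; everything else is the union bound over the $O(\sqrt{n/\log n})$ hops of a straight-line route, already packaged inside Lemma~\ref{unionbnd} via the hop-count estimate of Lemma~\ref{lemmaH}. The only point demanding care is constant-tracking, so that the condition on $n$ emerges exactly as $n>\frac{4}{5\sqrt{\pi}\epsilon}$ and does not pick up a spurious extra $\log n$ or square factor; the crude bound $\sqrt{\log n}\ge 1$ used in the display above is what keeps it clean.
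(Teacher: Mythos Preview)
Your proof follows essentially the same route as the paper's: reduce to Lemma~\ref{unionbnd} by turning the SINR guarantee into a per-link loss bound via the tail decay of $1-\phi$, with $\rho_n=10\sqrt{\log n}$ producing the target $\frac{5\sqrt{\pi}\,\epsilon}{4}\sqrt{\log n / n}$. The only difference is in how the tail is quantified: the paper invokes the exponential bound $P_e(x)\le e^{-x}$ (from the large-deviations discussion preceding the lemma), giving $P_e(\log n)\le 1/n$, while you phrase it as $1-\phi(\log n)\le n^{-c}$ with $c\ge 3/2$; the argument structure is identical.
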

\begin{proof}
See Appendix B for proof.
\end{proof}

\begin{lemma}
For each scheduled transmission of a cell an SINR of at least
{{$\beta_n={\log{n}}$}} is
guaranteed if all the cells that are within a 
distance $R_n$ of the given cell are silent, where $R_n$ is given by:
\begin{equation}
R_n = 
%20\sqrt{{\log{n}}}  +
%80\left(\frac{256}{(\alpha-2)}\right)^{\frac{1}{(\alpha-2)}}\cdot\left(\log{n}\right)^{\frac{\alpha}{2(\alpha-2)}} \nonumber \\
20\sqrt{{\log{n}}} \left(  +
4\left(\frac{256{{{\log{n}}}}}{(\alpha-2)}\right)^{\frac{1}{(\alpha-2)}}\right)
\label{ropt}
\end{equation}
\label{pisched}
\end{lemma}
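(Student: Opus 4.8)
The plan is to prove this directly from the geometry of the tessellation, by lower‑bounding the desired signal power and upper‑bounding the aggregate interference, and then substituting the stated value of $R_n$. Fix an arbitrary scheduled transmitter–receiver pair. Since the receiver lies in a cell adjacent to the transmitter's cell and every Voronoi cell is contained in a circle of radius $2\rho_n$, the transmitter–receiver distance is at most $8\rho_n$, so the received signal power is at least $P(8\rho_n)^{-\alpha}$. By the same reasoning, the receiver lies within distance $c''\rho_n$ of the transmitter's cell, for a fixed constant $c''$ depending only on the tessellation; hence every interfering transmitter, being in a cell at distance at least $R_n$ from the transmitter's cell, is at distance at least $R_n' := R_n - c''\rho_n$ from the receiver.

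Next I would bound the interference $I$. The scheduling activates at most one node per cell, and every cell contains a disk of radius $\rho_n$ (area at least $\pi\rho_n^2$) and has diameter at most $4\rho_n$; so partitioning the range $[R_n',\infty)$ into annuli around the receiver and counting cells by area, the number of active interferers at distance in $[r,r+dr)$ is $O(r\,dr/\rho_n^2)$, each contributing at most $P\,r^{-\alpha}$. Summing (equivalently, bounding by an integral) gives
\begin{equation*}
I \;\leq\; \frac{c' P}{\rho_n^2}\int_{R_n'}^{\infty} r^{1-\alpha}\,dr \;=\; \frac{c' P}{\rho_n^2}\cdot\frac{(R_n')^{2-\alpha}}{\alpha-2},
\end{equation*}
with $c'$ a small absolute constant (one may take $c'=2$ from the area counting). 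The tail integral converges \emph{precisely because} $\alpha>2$, which is exactly why the factor $(\alpha-2)$ appears in the denominator of $R_n$. Dividing the signal lower bound by this interference upper bound yields $\text{SINR} \geq \dfrac{\alpha-2}{c'\,8^{\alpha}}\left(\dfrac{R_n'}{\rho_n}\right)^{\alpha-2}$.

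The final step is to substitute the stated $R_n$ and check the right‑hand side is at least $\beta_n=\log n$. With $\rho_n=10\sqrt{\log n}$ we have $20\sqrt{\log n}=2\rho_n$, so the dominant term of $R_n$ equals $8\rho_n\bigl(\tfrac{256\log n}{\alpha-2}\bigr)^{1/(\alpha-2)}$; pulling the $8\rho_n$ inside the root gives $(R_n/\rho_n)^{\alpha-2}\geq \tfrac{256\cdot 8^{\alpha-2}}{\alpha-2}\log n$, and since $256\cdot 8^{\alpha-2}/(c'\,8^{\alpha}) = 256/(64c')=2$ for $c'=2$, one gets $\text{SINR}\geq 2\log n$ if $R_n'$ could be replaced by $R_n$. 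The additive constant term of $R_n$ (together with the standing hypothesis that $n$ is large enough) is present precisely to absorb the $R_n'$‑versus‑$R_n$ discrepancy inside the exponent $\alpha-2$, consuming the factor of two of slack and leaving $\text{SINR}\geq\log n$, uniformly in $\alpha>2$ and over all scheduled pairs.

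I expect the main obstacle to be pinning down the interference bound rigorously and with honest, $\alpha$‑uniform constants: converting ``silent within $R_n$ of the transmitting cell'' into a clean lower bound on every interferer's distance to the \emph{receiver}, handling the variable cell shapes using only the inscribed/circumscribed radii and the diameter bound, replacing the cell‑sum by an integral without losing the needed constant, and verifying the final inequality uniformly in $n$ and $\alpha$. A minor additional point is that the estimate is carried out in the locally‑flat approximation of $S^2$; this is legitimate here because the relevant length scale $R_n$ is $o(\sqrt n)$ and a spherical annulus has no larger area (and no larger circumference) than the corresponding planar one, so the planar estimate is a valid over‑count.
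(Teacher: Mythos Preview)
Your approach matches the paper's: lower-bound the signal by $P(8\rho_n)^{-\alpha}$, upper-bound the aggregate interference by an annular integral using the minimum cell area, and absorb the offset between the receiver and the reference point via the additive $20\sqrt{\log n}$ term in $R_n$. The only cosmetic differences are that the paper anchors the annuli at the center of the transmitting cell (adding $2\rho_n$ to $R_n$ at the end) rather than at the receiver (subtracting $c''\rho_n$ up front), and the paper uses the spherical disk-area lower bound $\pi\rho_n^2/2$ from its Lemma~\ref{lemmaEXP2} (so effectively $c'=4$, not $2$), which makes the final inequality exact rather than leaving the factor-of-two slack you describe.
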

\begin{proof}
See Appendix B for proof.
\end{proof}

Concurrent transmissions
that are at least $R_n$ distance apart can then be scheduled using the following policy.

\noindent{\bf{Scheduling policy }}{\pmb{$\pi_2$:}} 
Consider the graph in which each cell is represented by a vertex. Two vertices of the graph
have an edge between them if and only  if their corresponding cells are within a distance of
$R_n$ from each other. Since the area of each cell is lower bounded by
${50\pi\log{n}}$, the number of cells within a
distance $R_n$ of a cell, and correspondingly the maximum degree of a vertex in
the corresponding graph is upper bounded by $V_n$ given by
\begin{equation}
V_n
= 8\left(1+4\left(\frac{256{{{\log{n}}}}}{(\alpha-2)}\right)^{\frac{1}{\alpha-2}}\right)^2.
\label{vneq}
\end{equation}
Since a graph in which the maximum degree of a node is $k$ can be vertex colored using no more than
$k+1$ colors in polynomial time \cite{bondy}, we use this algorithm to
vertex color the graph. Thus, the required number of colors is no more than $K_n =
V_n+1$, which is a function of $n$ unlike $\pi_1$ where the number of colors is a constant independent of $n$.

Under scheduling policy $\pi_2$ we showed that the received interference power is bounded.
Indeed  it can be seen that for a path loss
exponent $\alpha > 2$ all $r$-th moments of the interference for $r \geq 2$ are bounded. 
%Hence the process cannot be alpha-stable with finite exponent unlike the scenario considered in \cite{alphastable}.
Moreover, since all moments exist, the interference satisfies the large deviations principle, i.e., a
moment generating function of the interference is well defined, and hence the distribution function of interference has exponentially decaying
tail.
%which 
Hence a ``good'' rate function, $I(x)$, exists by Cramer's theorem, i.e. $P(\text{Interference} > x) \leq e^{-I(x)}$ where 
$\lim_{x\to\infty} \frac{I(x)}{x} = \infty$.
This implies that the tail distribution of the interference has  at least exponential decay.
This property is sufficient to conclude that the packet loss probability will also 
have a good rate function associated with it since the function of the SINR is smooth by the contraction principle.
From this we can conclude that there is
at least an exponential decay of the packet loss probability which is all that we need to establish our results\footnote{Note
that if we can show that the moment generating function, $M(t)$, of the interference is such that $\log{M(t)}$ is differentiable for all $t$,
the rate function is strictly convex \cite{ellis}, and the tail fall-off of $e^{-x^2}$ can be obtained. In this case, we only need to ensure an SINR of
$\sqrt{\log{n}}$ instead of $\log{n}$, and the capacity bound can be improved. However, in order to prove that $\log{M(t)}$ is differentiable 
it is necessary to explicitly compute all the $r$-th moments of the interference term.}.

\begin{proposition}
For a given fixed $\epsilon>0$
with scheduling policy $\pi_2$, for a large network, i.e., if $n > \frac{4}{5\sqrt{\pi}\epsilon}$,
each connection has an end-to-end packet loss probability of less than $\epsilon$.
\label{pi2guar}
\end{proposition}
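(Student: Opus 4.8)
The plan is to read off Proposition~\ref{pi2guar} from the properties of $\pi_2$ established above, by chaining Lemma~\ref{pisched} with Lemma~\ref{schedreq}; the proposition is essentially the ``putting it all together'' statement.

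First I would verify that $\pi_2$ satisfies the hypothesis of Lemma~\ref{schedreq}, namely that every scheduled transmission has SINR at least $\beta_n=\log n$. By construction $\pi_2$ is a proper vertex coloring of the conflict graph whose edges join exactly those pairs of cells lying within distance $R_n$ of one another; hence, in any slot, the set of simultaneously active cells is an independent set of that graph, so any two active cells are more than $R_n$ apart. In particular, for an arbitrary scheduled (transmitting) cell, every cell within distance $R_n$ of it is silent — which is precisely the situation covered by Lemma~\ref{pisched}. That lemma then gives SINR~$\geq\log n$ at the receiver of that hop, and since the argument applies to every active cell in every slot, $\pi_2$ ensures SINR~$\geq\log n$ on every hop actually used to forward a packet. (This step already uses $\alpha>2$, through Lemma~\ref{pisched}, to keep the aggregate interference from the cells beyond distance $R_n$ bounded even in the worst case where all of them transmit; the coloring constrains only how close active cells can be, not how many are active.)

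Second, with this SINR guarantee in hand and assuming the network is large, $n>\frac{4}{5\sqrt{\pi}\epsilon}$, Lemma~\ref{schedreq} applies directly and yields that each connection has end-to-end cumulative packet loss probability at most $\epsilon$. Inside that lemma the SINR bound, together with the exponentially decaying interference tail (the ``good'' rate function discussed above), forces the per-link loss probability below the threshold $\frac{\sqrt{\pi}\rho_n\epsilon}{8\sqrt n}$ demanded by the union bound of Lemma~\ref{unionbnd}, with $\rho_n=10\sqrt{\log n}$ the value for which connectivity holds with high probability. To recover the strict inequality stated in the proposition, note that the additive term $4\left(\frac{256\log n}{\alpha-2}\right)^{1/(\alpha-2)}$ in \eqref{ropt} makes the SINR guaranteed by Lemma~\ref{pisched} strictly larger than $\log n$ (equivalently, one may simply run the above with $\epsilon/2$).

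The argument is essentially assembly, so I do not expect a genuine obstacle. The one place that warrants care is the first step: one must check that the independent-set property of a color class of $\pi_2$ certifies the ``no active cell within $R_n$'' hypothesis of Lemma~\ref{pisched} simultaneously for every transmitting cell, and that Lemma~\ref{pisched}'s SINR bound is insensitive to an adversarial configuration of the active cells lying outside radius $R_n$. Both hold — color classes are independent by the definition of a proper coloring, and $\sum_r r^{1-\alpha}$ converges for $\alpha>2$ — so the heavy lifting has already been done in Lemmas~\ref{unionbnd}--\ref{pisched} and in the large-deviations discussion preceding the statement.
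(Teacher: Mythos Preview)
Your proposal is correct and matches the paper's approach: the proposition is indeed just the assembly of Lemma~\ref{pisched} (the coloring of $\pi_2$ guarantees SINR $\ge\log n$ on every scheduled hop) with Lemma~\ref{schedreq} (that SINR level plus the size condition $n>\tfrac{4}{5\sqrt{\pi}\epsilon}$ forces end-to-end loss $\le\epsilon$), with the exponential-tail/large-deviations remark preceding the statement supplying the per-link error bound used inside Lemma~\ref{schedreq}. The paper's own proof is a one-line pointer to exactly this chain.
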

\begin{proof}
The proof follows follows from exponential tail decay.
See Appendix B for proof
\end{proof}

We thus have the following main result.
\begin{proposition}
When random packet losses over intermediate links are taken into account, the
per-node throughput in an ad hoc network is $\Omega\left(\frac{1}{\sqrt{n}
\left({\log{n}}\right)^{\frac{\alpha{{+2}}}{2(\alpha-2)}}}\right)$, and the bound 
can be achieved under Scheduling Policy $\pi_2$.
\end{proposition}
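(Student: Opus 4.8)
The plan is to show that, under scheduling policy $\pi_2$, each source can sustain an injection rate $\lambda_n=\Omega\!\left(\frac{1}{\sqrt{n}\,(\log n)^{(\alpha+2)/(2(\alpha-2))}}\right)$ without queues building up anywhere, and then invoke Proposition~\ref{pi2guar} to conclude that a fraction at least $1-\epsilon$ of the injected packets is delivered, so the delivered per-node throughput $\Lambda_n$ has the same order. The argument thus reduces to a flow-conservation (load balancing) computation over the Voronoi cells, exactly as in \cite{gupta}, but with the constant number of colours replaced by the $n$-dependent $K_n=V_n+1$ of $\pi_2$.

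First I would bound the relay load carried by a cell. Using the straight-line routing policy and the fact that the $n$ source--destination pairs are placed independently and uniformly on $S^2$, a geometric counting argument (the route-count lemma of \cite{gupta}) shows that, with high probability, the number of connections whose straight-line route passes through any fixed cell is $O(\sqrt{n\log n})$, uniformly over all $\Theta(n/\log n)$ cells: each route crosses $\Theta(L_i/\rho_n)$ cells by Lemma~\ref{lemmaH}, $\mathbf{E}[L_i]=\Theta(\sqrt n)$, and a concentration bound over the family of line segments upgrades the expectation to a uniform w.h.p.\ statement. Since each cell contains $\Theta(\log n)$ nodes (Lemma 4.8 of \cite{gupta}), the total traffic a cell must forward per slot is at most $c\,\lambda_n\sqrt{n\log n}$ for a fixed constant $c$, because the relay term dominates the $\Theta(\lambda_n\log n)$ own-traffic term.

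Next I would use the service rate guaranteed by $\pi_2$: the vertex colouring uses $K_n=V_n+1$ colours with $V_n$ given by \eqref{vneq}, so every cell is activated at least once every $K_n$ slots and can forward at least $1/K_n$ packets per slot (choosing which of its nodes transmits). Flow conservation at every cell then requires $c\,\lambda_n\sqrt{n\log n}\le 1/K_n$, so any $\lambda_n\le\frac{1}{c\,K_n\sqrt{n\log n}}$ is feasible. Substituting $K_n=V_n+1=\Theta\!\big((\log n)^{2/(\alpha-2)}\big)$ from \eqref{vneq} and $\rho_n=10\sqrt{\log n}$ gives $\lambda_n=\Omega\!\left(\frac{1}{\sqrt n\,(\log n)^{1/2+2/(\alpha-2)}}\right)$, and since $\tfrac12+\tfrac{2}{\alpha-2}=\tfrac{\alpha+2}{2(\alpha-2)}$ this is precisely the claimed order. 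Finally, by Proposition~\ref{pi2guar}, for $n>\frac{4}{5\sqrt\pi\epsilon}$ each connection delivers its packets with probability at least $1-\epsilon$, whence $\Lambda_n\ge(1-\epsilon)\lambda_n=\Omega\!\left(\frac{1}{\sqrt n\,(\log n)^{(\alpha+2)/(2(\alpha-2))}}\right)$.

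The main obstacle is the uniform route-count bound of the second step: one must ensure the $O(\sqrt{n\log n})$ load holds simultaneously for \emph{every} cell with high probability, not merely in expectation, which requires a union bound combined with a tail estimate for the number of random lines hitting a fixed cell. Everything after that is bookkeeping with the already-established Lemmas~\ref{unionbnd}--\ref{pisched} and Proposition~\ref{pi2guar}. A secondary subtlety worth checking is that the SINR requirement $\beta_n=\log n$ — forced by Proposition~\ref{pi2guar} through the merely exponential (rather than Gaussian) decay of the interference tail — produces a silencing radius $R_n$ as in \eqref{ropt} and hence a scheduling overhead $V_n=\Theta((\log n)^{2/(\alpha-2)})$ whose exponent combines with the extra $\sqrt{\log n}$ from cell density to give exactly $\frac{\alpha+2}{2(\alpha-2)}$; any slack in $\beta_n$ would change this exponent, as noted in the footnote.
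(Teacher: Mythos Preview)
Your proposal is correct and follows essentially the same approach as the paper: bound the per-cell relay load by $c\,\lambda_n\sqrt{n\log n}$ (the paper simply cites Lemma~4.12 of \cite{gupta} for this, which is the uniform route-count bound you identify as the main obstacle), compare with the service rate $W/K_n$ offered by $\pi_2$, substitute $K_n=\Theta\big((\log n)^{2/(\alpha-2)}\big)$ from \eqref{vneq}, and multiply by $(1-\epsilon)$ via Proposition~\ref{pi2guar}. The exponent bookkeeping $\tfrac12+\tfrac{2}{\alpha-2}=\tfrac{\alpha+2}{2(\alpha-2)}$ is exactly what the paper does as well.
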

\begin{proof}
If each connection generates packets at a rate of $\lambda_n $, then from Lemma
4.12 in \cite{gupta}, the total load of transmitting packets on each cell is
upper bounded by $c_5\lambda_n \sqrt{n\log{n}}$ with high probability.
%Note that this load only depends on the number of straight line routes out of $n$ routes (with 
%source destination locations selected at random) intersecting a cell, and
%hence is a geometric property that is not affected by how the area scales.
When the cell gets an opportunity to transmit, it uses the entire bandwidth to
transmit at a rate of $W$ bits per second. Thus, with high probability, a rate
of $\lambda_n $ can be scheduled if the
following holds.
\begin{align}
&c_5\lambda_n \sqrt{n\log{n}} \leq \frac{W}{K_n}\nonumber\\
&\Rightarrow \lambda_n  \leq \frac{W}{c_5 \sqrt{n\log{n}} \left(a_0 + a_1 \left(\log{n}\right)^{\frac{1}{(\alpha-2)}} + a_2
\left(\log{n}\right)^{\frac{{{2}}}{\alpha-2}}\right)}
\end{align}
using $K_n=V_n+1$, and Eq.~\eqref{vneq}, and 
where $a_0, a_1$, and $a_2$ are constants that do not depend on $n$. 
%Under scheduling policy $\pi_2$, each cell gets a transmission opportunity once
%every $K_n$ slots, where $K_n$ is given by the following.
%\begin{align}
%K_n
%%&=V_n+1\nonumber\\
%%&= 1+8\left(1+4\left(\frac{(\alpha-2){{{\log{n}}}}}{256}\right)^{\frac{1}{\alpha-2}}\right)^2\nonumber\\
%&= a_0 + a_1 \left(\log{n}\right)^{\frac{1}{(\alpha-2)}} + a_2
%\left(\log{n}\right)^{\frac{{{2}}}{\alpha-2}}. 
%\label{vn}
%\end{align}
%Furthermore, note that the above bound is tight, i.e., the bound on the RHS
%can be achieved. 

The goodput after accounting for cumulative packet losses is
then given by:
\begin{align}
\text{\bf{E}} [ \Lambda_n ] &= \frac{(1-\epsilon)W}{c_5 \sqrt{n\log{n}} \left(a_0 + a_1 \left(\log{n}\right)^{\frac{1}{(\alpha-2)}} + a_2
\left(\log{n}\right)^{\frac{{{2}}}{\alpha-2}}\right)} \nonumber\\
&= \Omega\left( \frac{1}{\sqrt{n}
\left({\log{n}}\right)^{\frac{\alpha{{+2}}}{2(\alpha-2)}}}\right).
\end{align}
\end{proof}

%
%Thus, we show that capacity scaling results are highly sensitive to the underlying models. 
%In particular, the impact
%of cumulative packet loss on the network capacity is significant, and one way to tackle the problem is to use a 
%conservative scheduling algorithm which progressively improves the link SINR as the network size, and hence the
%number of hops in the network increases.
%\noindent{\colorbox{red}{\parbox{\linewidth}{
%\noindent{\bf{Remark:}} Note that under scheduling policy $\pi_2$, and with the
%threshold-based SINR to packet success mapping from \cite{gupta}, we
%get the per-node capacity to be $\Omega\left(\frac{1}{\sqrt{n}
%\left({\log{n}}\right)^{\frac{\alpha{{+2}}}{2(\alpha-2)}}}\right)$. This
%capacity reduction as compared to
%$\Theta\left(\frac{1}{\sqrt{n\log{n}}}\right)$ is due to the use of
%$K_n$ colors instead of $K$ colors.
%}}}

\section{Related Work}
\label{related}
%Throughout this paper, we refer to the work of Gupta and Kumar on the
%capacity of random ad hoc networks \cite{gupta}. 
In their seminal work on the capacity of wireless networks \cite{gupta}, Gupta and Kumar
used a simplified link layer model in which each packet
reception is successful if the receiver has an SINR of at least $\beta$.
%The authors assume that each packet is decoded at every hop along the path from source to
%destination. No cooperative communication strategy is used, and
%interference from other simultaneous transmissions is treated
%just like noise. 
For this communication model and under the random network model, the authors propose a
routing and scheduling strategy, and show that a per-node throughput of
$\Theta \left( \frac{1}{\sqrt{n \log{n}}}\right)$ can be achieved.
The authors also consider another model of ad hoc networks called arbitrary networks.
In an arbitrary network, nodes are placed at arbitrary
locations in a region of fixed area, source-destination pairs are chosen arbitrarily, and each node can
transmit at an arbitrary power level. This is unlike a random ad hoc network where node are deployed
randomly and uniformly, source-destination pairs are chosen at random, and all the nodes transmit at
the same power level\footnote{The definitions of random networks and arbitrary networks have been taken from 
\cite{gupta}.}. For the arbitrary network model, the authors show that a per-node throughput of
$\Theta\left(\frac{1}{\sqrt{n}}\right)$ is achievable. 
However, for random networks, the routing and scheduling scheme proposed in
\cite{gupta} only achieves a throughput of $\Theta\left(\frac{1}{\sqrt{n\log{n}}}\right)$. 

%On the other hand, the set of interferers in \cite{alphastable} is Poisson distributed, and
%hence the interference signal has higher variations, and therefore their results are not applicable to our
%scheduling scheme.

The gap
between the arbitrary networks and the random networks is closed in \cite{franc}. In \cite{franc}, the
authors note that the requirement of connectivity with high probability in random networks as required
in \cite{gupta}, requires higher transmission power at all the nodes, and results in excessive
interference. This in turn lowers the throughput of random networks from $\Theta\left(\frac{1}{\sqrt{n}}\right)$ to
$\Theta\left(\frac{1}{\sqrt{n\log{n}}}\right)$. Motivated by this, the authors in \cite{franc}
propose using a backbone-based relaying scheme in which instead of ensuring connectivity with
probability one, they allow for a small fraction of nodes to be disconnected from the backbone. 
The nodes in the backbone are densely connected, and can communicate over each hop at a constant rate. 
Such a backbone traverses up to $\sqrt{n}$ hops. The nodes that are not a part of the backbone, send
their packets to the backbone using single hop communication. The authors then show that the
interference caused by these long range transmissions does not impact the traffic carrying capacity of
the backbone nodes. Finally, the authors show that the relaying load of the backbone determines 
the per-node throughput of such a scheme, and this results in an achievable per-node throughput of
$\Theta\left(\frac{1}{\sqrt{n}}\right)$. However, as in \cite{gupta}, the authors in \cite{franc} do not take into account
the impact of cumulative packet loss.
%Note that the authors use the same SINR based threshold model
%as in \cite{gupta}. Consequently, the arguments from
%Section \ref{secpi1} are directly applicable to the analysis in \cite{franc}.
%Hence the capacity results in \cite{franc} will also reduce to $O\left(\frac{1}{n}\right)$ when cumulative packet losses
%are accounted for.
%as observed in Section \ref{secpi1}, an asymptotically large number of hops results
%is significant cumulative packet loss, and thereby degrades the network capacity. Consequently, 
The approach of allowing a few disconnected nodes in the network has also been used in
\cite{dousse}. Capacity of ad hoc networks under cooperative relaying is studied in \cite{scaling} where instead
of decode and forward model, node are clustered in MIMO configurations.

In \cite{liang}, the authors discuss the limitations of the work in
\cite{gupta} by taking a network information theoretic approach. The
authors discuss how several cooperative strategies such
as interference cancellation, network coding, etc. could be used to
improve the throughput.
%However these tools cannot be exploited fully
%with the current technology which relies on point-to-point coding, and
%treats all forms of interference as noise.
The authors also discuss how
the problem of determining the network capacity from an information
theoretic view-point is a difficult problem, since even the capacity of
a three node relay network is unknown. In Theorem 3.6 in \cite{liang},
the authors determine the same bound on the capacity of a random network as 
obtained in \cite{gupta}.
All the above works do not take into account the capacity reduction due to cumulative packet loss
which is the focus of our work.
In addition to the above works, the capacity of ad hoc networks has also been investigated under 
additional assumptions such as node mobility and
capacity-delay
trade-offs \cite{gaurav}, network information theoretic view of network
capacity \cite{liang}, etc.

\section{Conclusion}
\label{conclude}
Our results show that idealized modeling assumptions can lead to optimistic conclusions,
and hence for scaling techniques to provide useful guidelines it is important to understand the limitations of the
underlying assumptions.
The key observation in this paper is that
for a broad range of routing and scheduling schemes (including the one proposed in \cite{gupta}),
the number of hops of each connection scales to infinity with $n$.
In a realistic link layer model, for a given modulation and coding scheme, the
probability of packet loss over any given link is non-zero for finite SINR values.
With the above link layer model, we show that for a broad class of routing and scheduling policies
we cannot achieve a per-node throughput of more than $O \left( \frac{1}{n}\right)$ due to the
cumulative probability of packet loss over all the hops of the connection. 
However, it is possible to improve the link SINR by using a more conservative scheduling policy with a
lower spatial reuse. In this case the per-node throughput is $\Omega\left(\frac{1}{\sqrt{n}
\left({\log{n}}\right)^{\frac{\alpha{{+2}}}{2(\alpha-2)}}}\right)$. This bound is 
achievable under the proposed scheduling policy.

\section*{Acknowledgments}
%\noindent{\colorbox{red}{\parbox{\linewidth}{
This work was supported in part by the Natural Sciences and Engineering                                             
Research Council of Canada (NSERC), the National Science Foundation                                                 
(NSF), and a grant from Research in Motion (RIM).
%This work was supported in part by the Indiana Twenty First Century Fund through the Indiana Center
%for Wireless Communication and Networking, and by the National Science Foundation Grant No. 0087266.
We would also like to thank Prof. Ness Shroff (Ohio State University, USA) for
his valuable suggestions that greatly helped us improve the quality of this manuscript.
%}}}

%\section*{Appendix: Capacity Results for Arbitrary Routing under Scheduling Policy $\pi_1$}
\section*{Appendix A: Capacity under Scheduling Policy $\pi_1$, continuous $\phi(\cdot)$ and arbitrary routing}
%\section{Capacity under Scheduling Policy $\pi_1$, continuous $\phi(\cdot)$ and arbitrary routing}
\label{secpi1rtng}
In Section \ref{secpi1}, we saw that when straight line routing (proposed in \cite{gupta}) is used, the per-node throughput is
$O\left(\frac{1}{n}\right)$. 
In this section, we show that this bound on the throughput also holds for a broad class of
routing schemes, i.e., Proposition \ref{propmain1} holds even when the assumption of straight-line routing is relaxed.
In this section, our goal is to investigate if a higher per-node
throughput can be achieved if we choose routing paths intelligently, i.e., not necessarily straight line 
paths between source-destination pairs. We only consider those routing paths in which
multi-hop communication takes place through packet-relaying between adjacent cells,
and there are no loops in the routing path.
The first condition is to ensure that communication is not possible between two nodes if they do not belong to adjacent cells,
while the second condition ensures that a packet does not visit a cell more than once on its path from source to destination.
Note that even with this restriction, we can account for a vast range of routing schemes.

Let us assume that a packet of connection $i$ takes a certain path (not necessarily straight line
path) from the source node to the destination node. Let ${\hat{L}}_{i}$ be the length of this path, i.e.,
the actual distance traveled by the packet along the surface of $S^2$. Then we have the following Lemma which is essentially a
generalization of Lemma \ref{lemmaHh}. In this Lemma, we show that at least a fixed fraction of hops for any routing scheme are over
a distance of $t\rho_n$ or longer for a fixed $t>0$. 

\begin{lemma}
\label{lemmartng}
Fix $W=40$, and $t=0.01$.
Assume {\em any} arbitrary routing path such that 
\begin{description}
\item[R1] Each hop is between nodes of two adjacent cells,
\item[R2] There are no loops in the routing path.
\end{description}
Then, under scheduling policy $\pi_1$, and R1 and R2, we cannot have $W$ consecutive hops of distance $t \rho_n$ or {\em less}. In other words, for any routing scheme
that satisfies R1 and R2, at least one out of every $W$ consecutive hops is at least $t \rho_n$ long.
\end{lemma}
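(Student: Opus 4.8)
The plan is to argue by contradiction via an elementary area‑packing estimate on $S^2$, in the same spirit as Lemma~\ref{lemmaHh}. Suppose, contrary to the claim, that some routing path obeying R1 and R2 contains $W$ consecutive hops each of (geodesic) length at most $t\rho_n$, and let $v_0,v_1,\dots,v_W$ be the $W+1$ nodes visited by these hops, in order. The first step is a triangle‑inequality bound along the surface of $S^2$: since each of the $W$ hops has length at most $t\rho_n$, every $v_j$ satisfies $|X_{v_j}-X_{v_0}| \le j\,t\rho_n \le W t\rho_n = 0.4\,\rho_n$, so all of $v_0,\dots,v_W$ lie inside a single disk $B\!\left(X_{v_0},\,0.4\rho_n\right)$ measured along $S^2$.

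The second step exploits the tessellation structure. By R1, consecutive nodes lie in two distinct adjacent cells; by R2, no cell is visited twice along the path; hence the cells $V^{(0)},\dots,V^{(W)}$ containing $v_0,\dots,v_W$ respectively are pairwise distinct, and being Voronoi cells they are pairwise disjoint. Each $V^{(j)}$ contains an inscribed disk $D_j$ of radius $\rho_n$, so the $D_0,\dots,D_W$ are pairwise disjoint as well. Moreover each $V^{(j)}$ is enclosed in a disk of radius $2\rho_n$, whose centre lies within $2\rho_n$ of $X_{v_j}$ (because $v_j\in V^{(j)}$); hence every point of $D_j$ is within $2\rho_n+2\rho_n = 4\rho_n$ of $X_{v_j}$, and therefore within $4\rho_n+0.4\rho_n = 4.4\,\rho_n$ of $X_{v_0}$. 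Thus the $W+1$ pairwise‑disjoint disks $D_0,\dots,D_W$, each of radius $\rho_n$, all lie inside $B\!\left(X_{v_0},\,4.4\rho_n\right)$.

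The contradiction comes from comparing areas. Because $\rho_n=\Theta(\sqrt{\log n})$ is negligible compared with the radius of $S^2$ (which grows like $\sqrt{n}$), a geodesic disk of radius $r$ has area $(1+o(1))\pi r^2$; in particular $\mathrm{area}\,B(X_{v_0},4.4\rho_n)\le \pi(4.4\rho_n)^2 = 19.36\,\pi\rho_n^2$, while each $\mathrm{area}\,D_j \ge (1-o(1))\pi\rho_n^2$. Disjointness then forces $(W+1)(1-o(1))\,\pi\rho_n^2 \le 19.36\,\pi\rho_n^2$, i.e.\ $41(1-o(1))\le 19.36$, which fails for all sufficiently large $n$ — the desired contradiction. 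Hence no $W$ consecutive hops can all have length at most $t\rho_n$, i.e.\ at least one out of every $W$ consecutive hops has length at least $t\rho_n$, which is the assertion.

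The computation is routine; the one point that needs genuine care, and that I would verify explicitly, is the passage from the flat packing picture to the spherical one — namely that the geodesic‑area corrections are truly lower order (this holds since $\rho_n/\mathrm{radius}(S^2)\to 0$), and that the radii $2\rho_n$ (enclosing) and $\rho_n$ (inscribed) from the Gupta--Kumar construction are exactly the quantities that enter the estimate. Everything else is bookkeeping with the triangle inequality and the disjointness of distinct Voronoi cells; note that the argument never uses the details of scheduling policy $\pi_1$, only the tessellation and the routing restrictions R1--R2, and that the chosen constants $W=40,\ t=0.01$ matter only through the inequality $W+1>(Wt+4)^2$, which is satisfied with substantial slack.
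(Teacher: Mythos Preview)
Your argument is correct and follows essentially the same route as the paper's proof: assume for contradiction that $W$ consecutive hops are all short, observe that all $W+1$ visited nodes lie in a disk of radius $Wt\rho_n$ about the starting node, use the cell geometry to conclude that the $W+1$ distinct cells (distinctness from R2) are all contained in a disk of radius $(Wt+4)\rho_n$, and obtain a contradiction by area packing. The only cosmetic difference is that the paper invokes the explicit bound $\mathrm{area}(D_j)\ge \pi\rho_n^2/2$ from Lemma~\ref{lemmaEXP2} (yielding $n_2\le 2(Wt+4)^2=38.72$), whereas you use the asymptotic $(1-o(1))\pi\rho_n^2$; both give the contradiction with room to spare.
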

\begin{proof}
See Appendix B.
\end{proof}

In the next Lemma, we show that except for a small fraction of hops, all other
hops have at least one concurrent transmission within a
distance of $(M+8)\rho_n$ of the receiver, and this fraction can be made arbitrarily small by choosing $M$ large enough.
\begin{lemma}
\label{lemmaMgen}
Fix $M>16$. Consider an arbitrary routing path between source and destination nodes such that R1 and
R2 are satisfied.
Under scheduling policy $\pi_1$, let $N_i$ be the number of hops of connection $i$ such that
there is no concurrent transmission within a circle of radius $(M+8) \rho_n$ around
the receivers of those hops. To avoid tedious boundary conditions, let us not count the hops containing the source
and the destination nodes in $N_i$. Then,
\begin{equation*}
N_i \leq \frac{{\hat{L}}_i}{\rho_n} \left( \frac{2(1 + c_1)}{M}\right)\mbox{,}
\end{equation*}
where $\hat{L}_i$ is the length of the routing path and $c_1$ is the constant from Lemma 4.4 in \cite{gupta}.
\end{lemma}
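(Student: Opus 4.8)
The plan is to follow the proof of Lemma~\ref{lemmaM} essentially verbatim, replacing the straight segment $L_i$ by the arbitrary routing path of length $\hat L_i$; dropping straightness costs only a slightly worse geometric constant, which is why $M>16$ is assumed here instead of $M>9$. I would start by recalling the structure of $\pi_1$: the cells are properly colored with at most $1+c_1$ colors (Lemma~4.4 of \cite{gupta}), and it is realized as the periodic TDMA in which the color-$c$ slot activates precisely the cells of color $c$, so that in any color-$c$ slot the cells of color $c$ are the only transmitters. The key point is then purely geometric. Hop $j$ of connection $i$ is performed by a transmitter lying in some cell $C_j$, during a slot of color $\mathrm{color}(C_j)$; hence the only other transmitters active then are the cells of color $\mathrm{color}(C_j)$. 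By \textbf{R1} the receiver of hop $j$ lies in a cell adjacent to $C_j$, and since every cell sits inside a disk of radius $2\rho_n$, adjacent cell centres are within $4\rho_n$ and the receiver is within $6\rho_n$ of the centre of $C_j$, while any other cell extends at most $2\rho_n$ beyond its own centre. Therefore, if hop $j$ is one of the $N_i$ ``quiet'' hops --- no interfering transmission within $(M+8)\rho_n$ of its receiver --- then no cell of color $\mathrm{color}(C_j)$ other than $C_j$ has its centre within $(M+8-2-6)\rho_n=M\rho_n$ of the centre of $C_j$ (the $8$ in $(M+8)\rho_n$ is calibrated precisely so this isolation radius comes out to $M\rho_n$). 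Call a cell with this last property \emph{color-isolated}. By \textbf{R2} the path has no loops, so every cell is visited at most once and is the transmitter of at most one hop of the path; discarding the two hops incident to the source and destination to avoid degenerate neighbourhoods, we get $N_i\le\#\{\text{color-isolated cells on the path of connection }i\}$.

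Next I would bound, for each fixed color $c\in\{1,\dots,1+c_1\}$, the number $k_c$ of color-isolated cells of color $c$ on the path, listing them as $D_1,\dots,D_{k_c}$ in the order the path reaches them. Since $D_\ell$ and $D_{\ell+1}$ both have color $c$, color-isolation of $D_\ell$ forces their centres to be more than $M\rho_n$ apart; as the relay nodes used in $D_\ell$ and $D_{\ell+1}$ lie within $2\rho_n$ of the respective centres, the stretch of the routing path from $D_\ell$ to $D_{\ell+1}$ has length strictly more than $M\rho_n-2\rho_n-2\rho_n=(M-4)\rho_n$. Summing over $\ell=1,\dots,k_c-1$ and using that the whole path has length $\hat L_i$,
\[
(k_c-1)(M-4)\rho_n < \hat L_i, \qquad\text{so}\qquad k_c < \frac{\hat L_i}{(M-4)\rho_n}+1 .
\]
(If $\hat L_i\le(M-4)\rho_n$ then $k_c\le 1$ trivially; such ultra-short connections, whose hop count is itself only $O(\sqrt{\log n})$, are irrelevant for the capacity bound, and in any case $\hat L_i\ge L_i=\Theta(\sqrt n)\gg\rho_n$ w.h.p.)

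Finally I would sum over the at most $1+c_1$ colors,
\[
N_i \le \sum_{c=1}^{1+c_1}k_c < (1+c_1)\!\left(\frac{\hat L_i}{(M-4)\rho_n}+1\right),
\]
and verify that for $M>16$ this is at most $\frac{\hat L_i}{\rho_n}\cdot\frac{2(1+c_1)}{M}$: the needed inequality $\frac{1}{M-4}+\frac{\rho_n}{\hat L_i}\le\frac{2}{M}$ reduces to $\frac{\rho_n}{\hat L_i}\le\frac{M-8}{M(M-4)}$, whose right side is positive once $M>8$ and comfortably absorbs the $O(\rho_n/\hat L_i)$ term for every connection of interest. The main obstacle, and the only place genuine care is needed, is the bookkeeping of these additive and multiplicative geometric constants (cell diameter $4\rho_n$, communication radius $8\rho_n$, relay-node displacement $2\rho_n$, the two boundary hops): an arbitrary path obeying \textbf{R1}--\textbf{R2} can meander, whereas the straight segment in Lemma~\ref{lemmaM} cannot, and carrying the slack through carefully is exactly what turns the straight-line threshold $M>9$ into $M>16$. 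A secondary point worth stating explicitly is that the reduction of ``quiet hop'' to the color-isolation of a single cell relies on the deterministic, color-driven nature of $\pi_1$ --- all color-$c$ cells fire together in the color-$c$ slot.
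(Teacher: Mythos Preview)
Your argument is correct and shares the paper's skeleton: decompose by color, show that a ``quiet'' hop forces its transmitting cell to be isolated from every other same-color cell by at least $M\rho_n$, and then bound the number of such isolated cells per color by comparing their pairwise separation to the path length $\hat L_i$. The one genuine difference is in the counting step. The paper introduces an auxiliary \emph{dotted path} joining the relay nodes of the isolated color-$j$ cells by straight segments (of total length $\hat L_i'(j)\le\hat L_i$), then uses a disk-packing argument: each isolated cell's surrounding circle of radius $\tfrac{M\rho_n}{2}$ contains at least $(M-8)\rho_n$ of the dotted path, and these circles are disjoint, so $N_i^j\cdot\tfrac{M\rho_n}{2}\le\hat L_i'(j)$ once $(M-8)>\tfrac{M}{2}$, i.e.\ $M>16$. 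You bypass the dotted path entirely and work directly with the actual routing path, bounding the stretch between consecutive isolated cells by $(M-4)\rho_n$ via the triangle inequality; the price is an additive ``$+1$'' per color, which you then absorb using $M>16$ and $\hat L_i/\rho_n\to\infty$. Your route is a bit more elementary (no auxiliary path, no packing), while the paper's disk argument yields the stated constant without the explicit additive slack---though its own endpoint cells have the same boundary issue, swept under the same ``avoid tedious boundary conditions'' clause you invoke.
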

\begin{proof}
See Appendix B.
\end{proof}

Analogous to Proposition \ref{propphi} in Section \ref{secpi1}, we have the following result for the arbitrary routing model.

\begin{proposition}
\label{propphinew}
There exists a fixed constant $M_0$, that does not depend on $n$, such that for at least
$\hat{L}_i/640\rho_n$ hops of connection $i$, the SINR is less than a fixed constant $\beta_1$ given by
\begin{equation*}
\beta_1    = 100^\alpha \left( {M_0 + 8} \right)^{\alpha}.
\end{equation*}
Since the SINR is upper bounded by a fixed constant $\beta_1$, the
probability of successful packet reception is also upper bounded by a fixed
constant $\phi(\beta_1) < 1$.
\end{proposition}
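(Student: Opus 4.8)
The plan is to reproduce the reasoning behind Proposition~\ref{propphi}, but with the arbitrary-routing substitutes for its two ingredients: Lemma~\ref{lemmartng} in place of Lemma~\ref{lemmaHh}, and Lemma~\ref{lemmaMgen} in place of Lemma~\ref{lemmaM}. The idea is to isolate a positive fraction of hops that are simultaneously (i) geometrically ``long'', so the received signal is small, and (ii) ``interfered'', so the denominator of the SINR is bounded away from $0$; on such hops the SINR is then below an absolute constant $\beta_1$.

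First I would note that, by conditions R1 and R2, every hop joins two adjacent cells, each contained in a disk of radius $2\rho_n$, so its two endpoints are at distance at most $8\rho_n$; summing over hops gives $H_i \geq \hat{L}_i/(8\rho_n)$. Applying Lemma~\ref{lemmartng} with its fixed constants $W=40$, $t=0.01$, at least one of every $40$ consecutive hops covers a distance $\geq 0.01\rho_n$, so the set $\mathcal{L}$ of such ``long'' hops has $|\mathcal{L}| \geq H_i/40 \geq \hat{L}_i/(320\rho_n)$ (up to an $O(1)$ rounding term, negligible for large $n$). On a hop in $\mathcal{L}$ the received signal power is at most $P(0.01\rho_n)^{-\alpha} = 100^\alpha P \rho_n^{-\alpha}$.

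Next I would apply Lemma~\ref{lemmaMgen} with $M = M_0$ for a constant $M_0 > 16$ to be fixed: the number $N_i$ of hops with no concurrent transmission inside the disk of radius $(M_0+8)\rho_n$ about their receiver satisfies $N_i \leq \frac{\hat{L}_i}{\rho_n}\cdot\frac{2(1+c_1)}{M_0}$. Hence the set $\mathcal{I}$ of the remaining hops --- those having such a nearby concurrent transmitter, so interference at least $P((M_0+8)\rho_n)^{-\alpha}$ --- satisfies $|\mathcal{I}| \geq H_i - N_i$. Counting inside the common universe of the $H_i$ hops of the connection (discarding the two hops adjacent to the source and destination, an $O(1)$ correction as in Lemma~\ref{lemmaMgen}),
\begin{equation*}
|\mathcal{L}\cap\mathcal{I}| \;\geq\; |\mathcal{L}| + |\mathcal{I}| - H_i \;\geq\; \frac{H_i}{40} - N_i \;\geq\; \frac{\hat{L}_i}{\rho_n}\left(\frac{1}{320} - \frac{2(1+c_1)}{M_0}\right).
\end{equation*}
Choosing $M_0$ a fixed constant with $\frac{2(1+c_1)}{M_0} \leq \frac{1}{640}$, e.g.\ $M_0 = 1280(1+c_1)$ (which indeed exceeds $16$), makes the right-hand side at least $\hat{L}_i/(640\rho_n)$. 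On every hop of $\mathcal{L}\cap\mathcal{I}$ the SINR is then at most $\frac{100^\alpha P \rho_n^{-\alpha}}{P((M_0+8)\rho_n)^{-\alpha}} = 100^\alpha (M_0+8)^\alpha = \beta_1$, and since $\phi(\cdot)$ is continuous, increasing, and approaches $1$ only as its argument $\to\infty$, we get $\phi(\text{SINR}) \leq \phi(\beta_1) < 1$ on all these hops, which is the claim.

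The only genuinely geometric/combinatorial content is already packaged in Lemmas~\ref{lemmartng} and~\ref{lemmaMgen} (respectively: a loop-free path cannot make $40$ consecutive microscopic hops, since otherwise too many disjoint fat cells would have to fit into a tiny disk; and the scheduling bound limiting the number of ``isolated'' hops), both of which I take as given. The step needing a little care is the inclusion--exclusion: one must be sure the ``long'' set $\mathcal{L}$ and the ``interfered'' set $\mathcal{I}$ are enumerated inside the same finite collection of hops so that their intersection is genuinely lower-bounded, and that the various $O(1)$ rounding/boundary corrections are harmless because $\hat{L}_i/\rho_n \to \infty$ for the connections that matter. Everything else is constant-chasing.
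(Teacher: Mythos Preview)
Your proof is correct and follows essentially the same approach as the paper's own proof: define the set of ``long'' hops via Lemma~\ref{lemmartng} (the paper's $A_i$, your $\mathcal{L}$), the set of ``isolated'' hops via Lemma~\ref{lemmaMgen} (the paper's $B_i$, your complement of $\mathcal{I}$), then intersect and choose $M_0$ so that $\frac{2(1+c_1)}{M_0}\le \frac{1}{640}$. Your inclusion--exclusion inequality $|\mathcal{L}\cap\mathcal{I}|\ge |\mathcal{L}|+|\mathcal{I}|-H_i$ is algebraically identical to the paper's $|A_i\cap B_i^c|\ge |A_i|-|B_i|$, and your explicit choice $M_0=1280(1+c_1)$ is exactly what the paper leaves implicit.
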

\begin{proof}
See Appendix B.
\end{proof}

We can now use the above Proposition, and proceed as in Section \ref{secpi1} to determine the end-to-end throughput of a
connection. For this, we note that at least $\hat{L}_i/640\rho_n$ hops of connection $i$ have an SINR that is upper bounded by a
constant $\beta_1$, i.e., a packet success probability that is upper bounded by a constant $\phi(\beta_1)$. 
Hence, replacing $L_i$ by ${\hat{L}}_i$ in \eqref{Lambdafirst} to \eqref{Lambda} we obtain the throughput a connection as
\begin{equation*}
{\bf E} [\Lambda_n ]         \leq \lambda_n \, {\bf E}_L \left[ \delta^{{\hat{L}}_i} \right].
\end{equation*}
Recall that $L_i$ denotes the length of the straight line joining the source-destination node pair of connection $i$,
and hence ${\hat{L}}_i \geq L_i$. Also, since $\delta \leq 1$, we get
\begin{equation*}
{\bf E} [\Lambda_n ] \leq \lambda_n \, {\bf E}_L \left[ \delta^{L_i} \right].
\end{equation*}
After this stage, we use \eqref{deltaL}, and proceed along the same
lines as Section \ref{secpi1}, and obtain the same asymptotic upper bound of $O\left(\frac{1}{n}\right)$
as in Propositions \ref{propmain1}. 
%\noindent{\colorbox{red}{\parbox{\linewidth}{
\begin{proposition}
\label{propmain2new}
For any choice of $\rho_n$ that satisfies the connectivity requirements
in \cite{guptaconn}, and for any routing scheme that
satisfies R1 and R2, the per-node throughput that can be achieved under
scheduling policy $\pi_1$ is $O(\frac{1}{n})$.
\end{proposition}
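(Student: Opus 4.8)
The plan is to reduce the statement to the throughput estimate of Section~\ref{secpi1}, run with an unspecified admissible $\rho_n$, and feed it the arbitrary-routing geometry that Proposition~\ref{propphinew} already provides. Proposition~\ref{propphinew} says that for \emph{any} routing path obeying R1--R2 at least $\hat{L}_i/640\rho_n$ hops of connection $i$ carry a per-hop success probability no larger than the fixed constant $\phi(\beta_1)<1$. Because the per-hop losses are independent across hops under the interference model, a packet of connection $i$ is delivered with probability at most $\phi(\beta_1)^{\hat{L}_i/640\rho_n}=\delta^{\hat{L}_i}$ with $\delta:=\phi(\beta_1)^{1/640\rho_n}<1$, so — exactly as in the derivation of \eqref{Lambdainter}, but with $\hat{L}_i$ in place of $L_i$ — one gets ${\bf E}[\Lambda_n]\le\lambda_n\,{\bf E}_L[\delta^{\hat{L}_i}]$. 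Any feasible routing path is at least as long as the great-circle segment joining its endpoints, so $\hat{L}_i\ge L_i$, and since $\delta\le1$ this only strengthens the bound to ${\bf E}[\Lambda_n]\le\lambda_n\,{\bf E}_L[\delta^{L_i}]$; a longer routing path helps us here rather than hurts. What remains is to bound $\lambda_n$ and ${\bf E}_L[\delta^{L_i}]$ for a general $\rho_n$.

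First I would estimate ${\bf E}_L[\delta^{L_i}]$. Writing $\delta^{L_i}=e^{-cL_i}$ with $c=-\ln\phi(\beta_1)/640\rho_n$ (a positive quantity of order $1/\rho_n$) and using that $S^2$ has area $n$, hence radius $r_n=\sqrt{n/4\pi}$, the great-circle distance between two independent uniform points has density $\tfrac{1}{2r_n}\sin(\ell/r_n)$ on $[0,\pi r_n]$. Therefore ${\bf E}_L[e^{-cL_i}]=\tfrac{1}{2}\int_0^\pi e^{-cr_n\theta}\sin\theta\,d\theta\le\tfrac{1}{2}\int_0^\infty e^{-cr_n\theta}\,\theta\,d\theta=\tfrac{1}{2(cr_n)^2}$, where the inequality uses only $\sin\theta\le\theta$ and needs no growth assumption on $\rho_n$. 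Since $cr_n$ is of order $\sqrt{n}/\rho_n$, this gives ${\bf E}_L[\delta^{L_i}]\le c_0\,\rho_n^2/n$ for a constant $c_0$ that depends on $\phi(\beta_1)$ but not on $n$ or on the choice of $\rho_n$; for $\rho_n^2=\Theta(\log n)$ this recovers \eqref{Lambdainter}.

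Next I would bound $\lambda_n$, generalising Lemma~\ref{lemma50logn}. Each Voronoi cell lies inside a disk of radius $2\rho_n$ and contains a disk of radius $\rho_n$, so its area is $\Theta(\rho_n^2)$; with unit node density the concentration bound (Lemma~4.8 of \cite{gupta}) shows that with high probability every cell holds $\Omega(\rho_n^2)$ nodes. Under $\pi_1$ a cell is scheduled in at least a $1/(1+c_1)$ fraction of the slots and forwards at most one packet each time it is scheduled, so merely transmitting the own traffic of its $\Omega(\rho_n^2)$ nodes already forces $\lambda_n=O(1/\rho_n^2)$ with high probability. Combining the two estimates, ${\bf E}[\Lambda_n]\le\lambda_n\,{\bf E}_L[\delta^{L_i}]\le O(1/\rho_n^2)\cdot O(\rho_n^2/n)=O(1/n)$, the $\rho_n^2$ factors cancelling; this holds for every routing scheme satisfying R1--R2 and every $\rho_n$ meeting the connectivity requirement of \cite{guptaconn}, which is the assertion.

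The main obstacle is not any individual computation but making the whole argument uniform in $\rho_n$: both $c_0$ and the constant hidden in $\lambda_n=O(1/\rho_n^2)$ must be fixed once and for all, independent of $n$ and of the admissible sequence $\rho_n$, so that the cancellation in the last display is legitimate in every case. This is precisely why I route the length estimate through $\sin\theta\le\theta$ rather than an asymptotic expansion of the integral, and why I invoke the node-count bound in its uniform-across-cells form. The only side conditions worth stating are that a Voronoi cell cannot have area comparable to $n$ (so the tessellation is nontrivial) and that $\pi_1$ keeps its constant-length coloring in the regime where Lemma~4.4 of \cite{gupta} applies; the substantive new ingredient — a fixed fraction $\hat{L}_i/640\rho_n$ of low-SINR hops for \emph{arbitrary} routing — has already been supplied by Lemmas~\ref{lemmartng}--\ref{lemmaMgen} and Proposition~\ref{propphinew}, so no further geometric work is needed.
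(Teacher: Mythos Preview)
Your proposal is correct and follows essentially the same route as the paper: invoke Proposition~\ref{propphinew} to bound the end-to-end success probability by $\delta^{\hat L_i}$, use $\hat L_i\ge L_i$ with $\delta\le 1$ to reduce to ${\bf E}_L[\delta^{L_i}]$, bound this by $O(\rho_n^2/n)$, bound $\lambda_n$ by $O(1/\rho_n^2)$ via the per-cell node count, and let the $\rho_n^2$ factors cancel. The only cosmetic difference is that you estimate ${\bf E}_L[\delta^{L_i}]$ directly via $\sin\theta\le\theta$ rather than quoting the exact formula of Lemma~\ref{lemmaEXP} and then bounding it, and you make the uniformity in $\rho_n$ explicit where the paper simply says ``proceed along the same lines as Section~\ref{secpi1}.''
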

%}}}
\endproof
Thus the asymptotic per-node throughput is
$O\left(\frac{1}{n}\right)$ even with arbitrary routing, demonstrating that the impact of cumulative
packet loss out-weighs intelligent routing.

\section*{Appendix B: Detailed Proofs}
\subsection*{Proof of Lemma \ref{lemmaH}}

%\begin{proof}
Since each Voronoi cell is contained in a circle of radius $2 \rho_n$, the maximum distance between
a point in a given cell, and a point in a neighboring cell is $8 \rho_n$ (see Fig.~\ref{figH}). 
Thus, over each hop, the
maximum distance that a packet can cover is upper bounded by $8 \rho_n$. Hence the lower bound.
\begin{figure}
\epsfysize=3.25cm
\centerline{\epsfbox{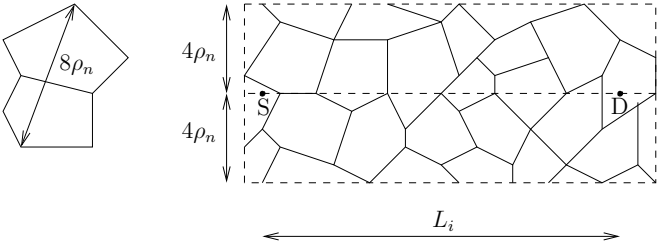}}
\caption{Bounding the number of hops $H_i$ for connection $i$.}
\label{figH}
\end{figure}

For the upper bound, if we look at a strip that is $4 \rho_n$ wide on either sides of $L_i$ 
(see Fig.~\ref{figH}), we observe that if a
cell is used as a relay cell, then it must lie {\em entirely}\, within this strip. This is because
each cell is completely contained inside a circle of diameter $4 \rho_n$. The area of this strip is
$8 \rho_n L_i$. Each cell contains a circle of radius $\rho_n$, and hence the area of each cell is at
least $\pi {\rho_n}^2 / 2$ (see Lemma \ref{lemmaEXP2} in Appendix). Thus, the maximum number of
cells that lie {\em entirely}\, within this strip
is upper bounded by $16 L_i / \pi \rho_n$. Hence the upper bound.

The lower bound is actually $\max \left( \frac{L_i}{8 \rho_n}, 1\right)$, but with
$L_i=\Theta(\sqrt{n})$, and $\rho_n=\Theta\left(\sqrt{\log{n}}\right)$, and we can ignore one. 
In any case, we require the upper bound, and not so much the lower bound in our subsequent analysis.
Also note, that for a more exact upper bound, we should consider a strip of length $L_i + 4 \rho_n + 4
\rho_n$ instead
of $L_i$, since the source and destination nodes could be located at the peripheries of their
respective cells. However, since $L_i$ is $\Theta \left(\sqrt{n}\right)$, and $\rho_n=\Theta\left(\sqrt{\log{n}}\right)$,
we can safely ignore the two terms corresponding to the sizes of the source and destination cells.
%Could continue to take L + 8 \rho_n, that would give after replacing 8-t in den by 4, and t in num
%by 1, a term equal to -32/\pi at the end of the next lemma. That will continue as a fixed number
%1/{{\phi}^32/\pi}, an increase in throughput, but a fixed factor nonetheless... Not worth making the
%analysis complicated..
\endproof%\end{proof}

\subsection*{Proof of Lemma \ref{lemmaHh}}

%\begin{proof}
Since $h_i$ hops each cover less than $t \rho_n$ distance, the leftover distance, which is at least
$L_i - h_i t \rho_n$ has to be covered by the rest of the $H_i - h_i$ hops. Each of these $H_i - h_i$
hops can cover a distance of at the most $8 \rho_n$. Hence, we have
\begin{align*}
(H_i - h_i)\, 8 \rho_n &\geq L_i - h_i(t \rho_n) \\
\Rightarrow h_i &\leq H_i \left( \frac{8}{8-t} \right) - \frac{L_i}{\rho_n}
\left(\frac{1}{8-t}\right) \\
\Rightarrow H_i - h_i &\geq H_i \left(\frac{-t}{8-t}\right) + \frac{L_i}{\rho_n} \left(\frac{1}{8-t}\right)\\
 &\geq \frac{L_i}{\rho_n} \left(\frac{16}{\pi}\right) \left(\frac{-t}{8-t}\right) +
 \frac{L_i}{\rho_n}\left(\frac{1}{8-t}\right)\mbox{,}
\end{align*}
where we have used Lemma \ref{lemmaH} to upper bound $H_i$ on the right hand side in the last step.
Thus,
\begin{equation*}
H_i - h_i \geq \frac{L_i}{\rho_n}\left(\frac{1 - \frac{16 t}{\pi}}{8-t}\right).
\end{equation*}
\endproof%\end{proof}

\subsection*{Proof of Lemma \ref{lemmaM}}
%\begin{proof}
To avoid tedious boundary conditions, let us not count the hops containing the source
and the destination nodes in $N_i$. 
There are at the most $1+c_1$ colors, i.e., each cell
gets a transmission opportunity at least
once every $1+c_1$ time slots. Let us index these $1+c_1$ colors, and consider any one of these colors,
say color $j$. 
We would like to determine the number of color-$j$ cells of connection $i$ that are at least 
$M \rho_n$ distance away from other color-$j$ cells.

For this, let $U_i^j$ be the set of all the cells of color $j$ that a packet of connection $i$ traverses.
We refer to a circle as a ``surrounding circle of a cell'' if its {\em center lies within the cell}.
Consider all the surrounding
circles of radius $M \rho_n$ around each of the cells in set $U_i^j$. 
For each cell, there are many such circles, since the only requirement is that
the center of the circle lie within the cell. A subset $V_i^j$ of $U_i^j$ is then formed as follows.
A cell from set $U_i^j$ is added to set $V_i^j$ if {\em none} of
its surrounding circles of radius $M \rho_n$ contains  any other color-$j$ cell either partially or fully.
Thus every cell in set $V_i^j$ is at least $M \rho_n$ distance away from any other color-$j$ cell.

\begin{figure}
\epsfysize=3cm
\centerline{\epsfbox{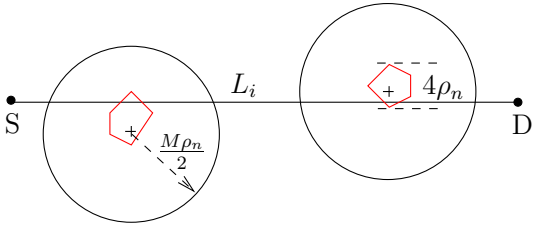}}
\caption{Bounding the number of hops $N_i^j$ of connection $i$ with color $j$ for which there is no interferer within
a distance $(M+8) \rho_n$ from the receiver.}
\label{figM}
\end{figure}

Hence, any surrounding circle of radius $\frac{M \rho_n}{2}$ around a cell in set $V_i^j$ does not 
overlap with any surrounding circle of radius $\frac{M \rho_n}{2}$ around another cell in set $V_i^j$.
Let there be $N_{i}^{j}$ cells in set $V_i^j$. 
The total length of the path is $L_i$ (referring to Fig.~\ref{figM}). We would like to determine the
maximum number of {\em non-overlapping surrounding}\, circles of radius $\frac{M \rho_n}{2}$ around color-$j$ cells that can be
accommodated along $L_i$. This number is an upper bound for $N_i^j$.
Each cell along $L_i$ can be enclosed inside a circle of diameter $4
\rho_n$. Referring to Fig.~\ref{figM}, maximal packing of the non-overlapping circles of radius $\frac{M \rho_n}{2}$ occurs when the
chord formed by the intersection of $L_i$ with these circles is of minimal length. 
%This occurs when the chord is shifted from its diametrical position by $4 \rho_n$.
If the diameter of the
circle encompassing the cell lies above (below) $L_i$, the cell can be shifted up (down) by at most $2\rho_n$ and still
intersect $L_i$.
%This is the maximum distance that the chord can be shifted while still ensuring that $L_i$ intersects the cell, and the center of the
%circle of radius $\frac{M \rho_n}{2}$ surrounding the cell lies inside the cell. 
Maximal packing occurs when each cell tangentially touches $L_i$, so that we select a
point in that cell $4\rho_n$ away from $L_i$, and draw a circle of radius $\frac{M\rho_n}{2}$ centered at that point.
The length of each chord formed by the intersection of $L_i$ and the surrounding circle of radius $\frac{M \rho_n}{2}$ is at
least $2\left( \sqrt{\left(\frac{M \rho_n}{2}\right)^2 - \left(4 \rho_n\right)^2} \right)$,
which in turn is at least $\frac{M \rho_n}{2}$ if $M> 9$. Since $N_{i}^{j}$ such chords are to 
be accommodated along the length of the path,
\begin{equation*}
N_{i}^{j} \frac{M \rho_n}{2} \leq L_i \Rightarrow N_{i}^{j}  \leq \frac{2 L_i}{M\rho_n}.
\end{equation*}
Therefore for all the colors,
\begin{align*}
N_i &= \sum\limits_{j=1}^{1+c_1} N_{i}^{j}
\leq\frac{L_i}{\rho_n}\left( \frac{2(1+c_1)}{M} \right).
\end{align*}
Thus except for these $N_i$ hops, each of the hops of connection $i$ has at least one cell that has a
simultaneous ongoing transmission
within a radius of $M \rho_n$ of the given cell. Or equivalently, within a radius of $(M+4+4)
\rho_n = (M+8)\rho_n$ of the {\em receivers}\, of these hops, there is at least one more
node that is transmitting simultaneously. This is because the maximum size of a cell is $4 \rho_n$.
And this proves the result.
\endproof%\end{proof}

\subsection*{Proof of Proposition \ref{propphi}}
%\begin{proof}
Let $A_i$ be the set of hops of connection $i$ for which the received signal is at the most $P
(t \rho_n)^{-\alpha}$. Then, given $\epsilon_1 > 0$, we can find $t>0$ small enough so that using
Lemma \ref{lemmaHh}, 
\begin{equation}
\label{nAi}
|A_i| \geq \frac{L_i}{\rho_n} \left( \frac{1}{8} - \epsilon_1 \right).
\end{equation}
Let $B_i$ be the set of hops of connection $i$ for which there is no simultaneous transmissions within a distance of 
$(M+8) \rho_n$ of the receiver.
Using Lemma \ref{lemmaM}, $|B_i| = N_i$, and given $\epsilon_2 >0$, 
we can choose $M$ large enough so that
\begin{align}
|B_i| &\leq \frac{L_i}{\rho_n} \left( \frac{2(1+c_1)}{M}\right) \leq \frac{L_i}{\rho_n}
\epsilon_2.
\label{nBi}			
\end{align}
Thus using \eqref{nAi} and \eqref{nBi},
\begin{align*}
|A_i \cap {B_i}^{c}| &\geq |A_i| - |B_i| \nonumber \\
& \geq \frac{L_i}{\rho_n} \left( \frac{1}{8} - \epsilon_1\right) - \frac{L_i}{\rho_n} \epsilon_2 \nonumber \\
& = \frac{L_i}{\rho_n} \left( \frac{1}{8} - \epsilon_1 - \epsilon_2\right).
\end{align*}
If we pick $\epsilon_1 = \epsilon_2 = 1/32$, and choose $t=t_0$ and $M=M_0$ corresponding to this choice of
$\epsilon_1$, $\epsilon_2$, then 
\begin{equation}
|A_i \cap {B_i}^{c}| \geq \frac{L_i}{16 \rho_n}.
\label{nAB}
\end{equation}
Note that $A_i \cap {B_i}^{c}$ is the set of hops over which the received signal is no more than $P
(t_0 \rho_n)^{-\alpha}$, {\it and} there is at least one simultaneous transmission within a distance of $(M_0
+ 8) \rho_n$ of the receiver. This in turn means that for these hops, the SINR is upper bounded by
\begin{align}
SINR &\leq \frac{P (t_0 \rho_n)^{-\alpha}}{P \left((M_0 + 8)\rho_n\right)^{-\alpha}} \nonumber \\
\label{sinr}
&= \left( \frac{M_0 + 8}{t_0} \right)^{\alpha} = \beta_0. 
\end{align}

\endproof%\end{proof}

\subsection*{Proof of Proposition \ref{propinter}}
%\begin{proof}
Let $\lambda_n $ be the rate in packets/slot at which every source node injects packets in the network.
Although the source node injects packets at a rate of $\lambda_n $ packets per slot, not all the
packets make it to the destination node. This is because, at each hop, the SINR is finite, and hence
there is a non-zero probability of the packet getting dropped. The actual end-to-end throughput of
the connection, denoted by $\Lambda_n $ is given by,
\begin{align}
\Lambda_n  = \lambda_n \, &\mbox{Prob} \left\{\mbox{packet is received
successfully over}\right. \nonumber \\
                              &\left.\mbox{all the hops of connection $i$}\right\} \nonumber \\
 = \lambda_n \, &\Pi_{j=1}^{H_i} \mbox{Prob} \left\{\mbox{packet is received
 successfully}\right. \nonumber \\
                 &\left.\mbox{over the $j$th hop of connection
								 $i$}\right\}\mbox{,}\nonumber 
\end{align}
where we assume that the interference and noise observed by a packet at each of the hops
are independent. We know from Proposition \ref{propphi} that, among the $H_i$ hops of connection $i$, at least $L_i/16 \rho_n$ hops have a probability of packet
success of no more than $\phi(\beta_0)$. Thus,
\begin{equation}
\Lambda_n  \leq \lambda_n \, \left\{\phi(\beta_0)\right\}^{\frac{L_i}{16\rho_n}}.
\label{Lambdafirst}
\end{equation}
Note that $L_i$ are i.i.d. random variables. Hence if we remove the conditioning on $L_i$ by taking
the expectation with respect to $L_i$, the end-to-end throughput ${\bf E} [\Lambda_n ]$ is,
\begin{align}
{\bf E} [\Lambda_n  ]&= {\bf E}_L [\Lambda_n ]  \\
          &\leq \lambda_n \, {\bf E}_L \left\{
					\left(\left\{\phi(\beta_0)\right\}^{\frac{1}{16\rho_n}}\right)^{L_i} \right\}	\\
          \label{Lambda}
          &= \lambda_n \, {\bf E}_L \left[ \delta^{L_i} \right].
\end{align}
where we have substituted $\delta = \left\{\phi(\beta_0)\right\}^{\frac{1}{16\rho_n}}$. 
Note that in determining the average end-to-end throughput ${\bf E} [\Lambda_n  ]$, we take expectations at two
levels; once to take into account the randomness due to the possibility of packet error on each link, and
once to take into account the randomness due to the locations of the source and destination nodes.
Also note that $0 < \delta < 1$. Since $L_i$ is a line connecting two points
picked at random on the surface of $S^2$, we can show that (See Lemma \ref{lemmaEXP} in Appendix).
\begin{equation}
\label{deltaL}
{\bf E}_L \left[ \delta^{L_i} \right] = \frac{2\pi\left(1 +
\delta^{\frac{\sqrt{\pi n}}{2}}\right)}{4\pi + n(\log \delta)^2}.
\end{equation}
Using \eqref{Lambda} and \eqref{deltaL},
\begin{align*}
{\bf E} [\Lambda_n ] &\leq \lambda_n \,  \frac{2\pi\left(1 + \delta^{\frac{\sqrt{\pi n}}{2}}\right)}{(4\pi + n(\log \delta)^2)} \nonumber \\
          &=   \lambda_n \,   \frac{2\pi\left(1 +
					\left\{\phi(\beta_0)\right\}^{\frac{\sqrt{\pi n}}{32\rho_n}}\right)}{\left(4\pi
					+ n\left(\frac{1}{16\rho_n}
					\log {\phi(\beta_0)}\right)^2\right)} \nonumber \\
          &=  \lambda_n \,    \frac{512\pi{\rho_n}^2\left(1 +
					\left\{\phi(\beta_0)\right\}^{\frac{\sqrt{\pi n}}{32\rho_n}}\right)}{\left(1024\pi{\rho_n}^2
					+ n(\log {\phi(\beta_0)})^2\right)} \nonumber \\
          &< \lambda_n \,  \frac{1024\pi{\rho_n}^2 }{\left( 1024\pi{\rho_n}^{2} + 
					n(\log {\phi(\beta_0)})^2\right)}\mbox{, } \mbox{ since $\phi(\beta_0) < 1$.} \nonumber \\
          &< \lambda_n \,  \frac{1024\pi{\rho_n}^2 }{n(\log
					{\phi(\beta_0)})^2}\nonumber \\
					&\leq \lambda_n \, \frac{c_0 \log{n}}{n}, \hspace{0.2cm} \text{since $\rho_n=\Theta\left(\sqrt{\log{n}}\right)$.}
\end{align*}

\endproof%\end{proof}

\subsection*{Proof of Lemma \ref{lemmartng}}
%\begin{proof}
\begin{figure}
\epsfysize=4cm
\centerline{\epsfbox{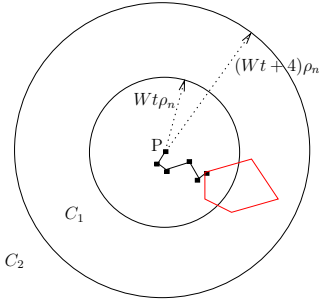}}
\caption{Bounding the number of hops of length $t \rho_n$ or less.}
\label{figrtng}
\end{figure}
By way of contradiction, assume that there exists some routing scheme that
satisfies the connectivity
requirements in \cite{guptaconn}, along with R1 and R2, and has $W$ consecutive hops, each of length $t \rho_n$ or less. 
Referring to Fig.~\ref{figrtng}, let point P be the initial location of the packet, and assume that the packet traverses $W$ hops from
this position. Since each of these hops is less than $t\rho_n$, the final location of the packet after traversing these $W$ hops is
inside a circle of radius $W t \rho_n$ centered at P. Let $C_1$ be this circle.
Clearly, every cell that the packet traversed during these $W$ hops must intersect circle $C_1$. Otherwise, the packet could not have
traversed that cell. We can easily find an upper bound on the number of cells that intersect $C_1$ as follows.
Consider another circle $C_2$ of radius $(W t+4)\rho_n$ centered at P. If a cell intersects $C_1$, it should lie {\em completely} inside
circle $C_2$. This is because the distance between any two points of a cell cannot exceed $4\rho_n$, and if a cell has at least one point
inside $C_1$, its other point cannot be more than $4\rho_n$ farther from the circumference of circle $C_1$.
The number of cells that lie completely inside $C_2$, denoted by $n_2$ is upper bounded as follows
\begin{align}
n_2 &\leq \frac{\pi \left((W t+4)\rho_n\right)^2}{\left(\frac{\pi {\rho_n}^2}{2}\right)} \nonumber \\
    &= 2 (W t+4)^2 \text{, }
\end{align}
where we have used Lemma \ref{lemmaEXP2} in the first step. For $W=40$ and $t=0.01$, we have $n_2 \leq 38.72$.  
Since 40 hops require at least 41 unique cells, and there are only up to 38 unique cells that the packet
can visit, we clearly have a contradiction.
Thus, there does not exist a routing scheme that satisfies, R1 and R2, and that has $W$ consecutive hops of $t \rho_n$ 
or less, where $W=40$, and $t=0.01$.
\endproof%\end{proof}
Note that the arguments in the above Lemma are easier to understand in case of simple tessellations such as triangular, square, hexagonal, etc.
For example, consider a square tessellation (square-shaped cells), such that each side of the cell is $2\rho_n$ long. Thus, each cell contains a circle of
radius $\rho_n$, and is contained in a circle of radius $2\rho_n$. In this case, it is easy to see that circle $C_1$ defined in Lemma \ref{lemmartng}
cannot intersect more than 4 circles, and this happens in the neighborhood of one of the vertices of the tessellation. 
Consequently, for a square tessellation, we cannot have 4 consecutive hops, each of length $0.01 \rho_n$ or less.
In fact, the upper bound could be made tighter by noting that there cannot be more than 4 consecutive
hops that are strictly less than $0.5 \rho_n$.
Similarly, for a hexagonal tessellation such that each cell has sides of length $2\rho_n$,
we cannot have 3 consecutive 
hops, each of length strictly less than $0.5 \rho_n$. Thus the value of $W$ is much smaller, and the value of 
$t$ is much larger for simple tessellations. However when we consider general tessellations that
satisfy the connectivity requirements, we note that the tessellation need not be as well-behaved as square or hexagonal. Yet, we
can obtain an upper bound on the maximum number of successive hops of arbitrarily small lengths. To do
this, we need to choose a fairly large (but finite and fixed) $W$, and a fairly small (but positive
and fixed) $t$.

\subsection*{Proof of Lemma \ref{lemmaMgen}}
%\begin{proof}
\begin{figure}
\epsfxsize=8cm
\centerline{\epsfbox{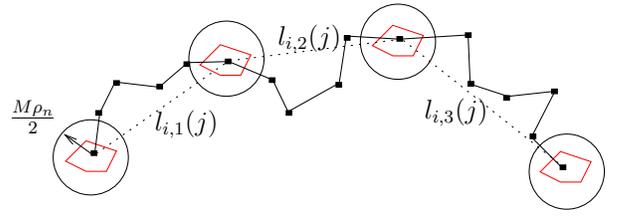}}
\caption{Arbitrary routing path.}
\label{figMgen}
\end{figure}
We use an approach identical to the one used to prove Lemma \ref{lemmaM}. 
Let $U_i^j$ be the set of all the cells of color $j$ that a packet of connection $i$ traverses.
Consider all the surrounding
circles of radius $M \rho_n$ around each of the cells in set $U_i^j$. 
For each cell, there are many such circles, since the only requirement is that
the center of the circle lie within the cell. A subset $V_i^j$ of $U_i^j$ is then formed as follows.
A cell from set $U_i^j$ is added to set $V_i^j$ if {\em none} of
its surrounding circles of radius $M \rho_n$ contains  any other color-$j$ cell either partially or fully.
Thus every cell in set $V_i^j$ is at least $M \rho_n$ distance away from any other color-$j$ cell.
Hence, any surrounding circle of radius $\frac{M \rho_n}{2}$ around a cell in set $V_i^j$ does not 
overlap with any surrounding circle of radius $\frac{M \rho_n}{2}$ around another cell in set $V_i^j$.
Let there be $N_{i}^{j}$ cells in set $V_i^j$,  and let $N_i$ be the sum of the number of cells in
$V_i^j$ for all $j$. We would like to determine an upper bound on $N_i$.

Referring to Fig.~\ref{figMgen}, consider the path taken by a packet of connection $i$ (solid line in
the figure). The path is piece-wise linear, with the locations of the relaying node of each hop as points of discontinuity.
Let $\hat{L}_i$ be the length of this routing path.
Let us rearrange the cells in set $V_i^j$ {\em in the
order in which a packet of connection $i$ traverses them}, 
and join them in this order (dotted lines in Fig.~\ref{figMgen}).
Let $l_{i,k}(j)$ be the length of the $k$th dotted line segment.
Define ${\hat{L}}_i^{'}(j)$ to be the total length of the dotted path. Then we have
\begin{equation}
{\hat{L}}_i^{'}(j) = \sum_{k} l_{i,k}(j) \leq {\hat{L}}_i.
\label{lprime}
\end{equation}
\begin{figure}
\epsfxsize=8cm
\centerline{\epsfbox{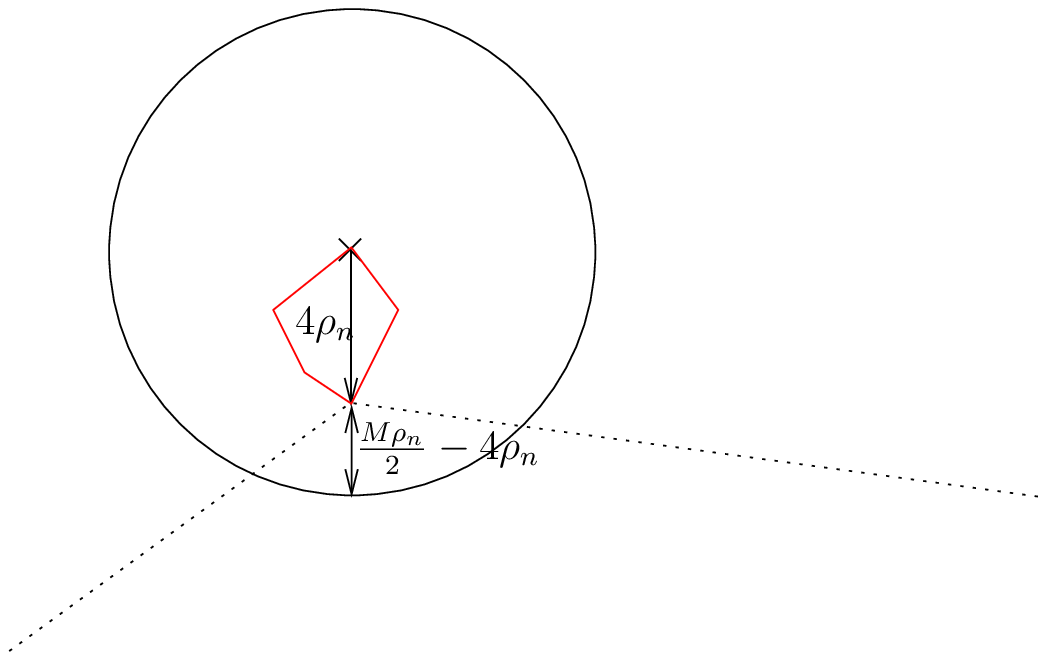}}
\caption{Bounding the number of hops $N_i$ of connection $i$ for which there is no interferer within
a distance $(M+8) \rho_n$ from the receiver with arbitrary (not straight line) routing.}
\label{figMgenlin}
\end{figure}
Consider a typical cell in set $V_i^j$ as shown in Fig.~\ref{figMgenlin}. 
As in the proof of Lemma \ref{lemmaM}, we first
determine the length of the minimum possible section of the routing path (in this case the dotted path) that lies inside the
surrounding circle. Since we know the total length of the dotted path, this gives us an upper bound on the number surrounding circles of
radius $\frac{M \rho_n}{2}$ that can be accommodated along the dotted path.

Referring to Fig.~\ref{figMgenlin}, the dotted path has its smallest section inside the surrounding circle of radius $\frac{M
\rho_n}{2}$
when the relaying node of the
cell, i.e., the point of discontinuity of the dotted path is located as close as possible to the circumference of the circle.
However, since this point belongs to the cell, it cannot be farther than $4\rho_n$ from any other point in the cell. But the center
of the surrounding circle must also lie inside the cell. Hence the point of discontinuity can be at the most $4\rho_n$ away from the
center of the circle. Hence the each portion of the
dotted path that lies within the circle can be no shorter than $\frac{M \rho_n}{2} - 4 \rho_n$. Thus, for each cell in $V_i^j$, at
least $(M-8)\rho_n$ portion of the dotted path must lie inside the surrounding circle. Let
$M-8>\frac{M}{2}$, i.e., $M>16$. Then the maximum number of such surrounding circles that can be accommodated along the dotted path,
i.e., the maximum number of cells in set $V_i^j$ is:
\begin{equation*}
N_{i}^{j} \frac{M \rho_n}{2} \leq \hat{L}_i^{'} \Rightarrow N_{i}^{j}  \leq \frac{2
\hat{L}_i^{'}}{M\rho_n}.
\end{equation*}
Therefore for all the colors,
\begin{align*}
N_i &= \sum\limits_{j=1}^{1+c_1} N_{i}^{j} \\
&\leq\frac{\hat{L}_i^{'}}{\rho_n}\left( \frac{2(1+c_1)}{M} \right) 
\leq\frac{\hat{L}_i}{\rho_n}\left( \frac{2(1+c_1)}{M} \right)\text{using \eqref{lprime}.}
\end{align*}
Thus except for these $N_i$ hops, each of the hops of connection $i$ has at least one cell that has a
simultaneous on-going transmission
within a radius of $M \rho_n$ of the given cell. Or equivalently, within a radius of $(M+4+4)
\rho_n = (M+8)\rho_n$ of the {\em receivers}\, of these hops, there is at least one more
node that is transmitting simultaneously. This is because the maximum size of a cell is $4 \rho_n$.
And this proves the result.
\endproof%\end{proof}

\subsection*{Proof of Proposition \ref{propphinew}}
%\begin{proof}
For any routing path, the number of hops is lower bounded by $\frac{\hat{L}_i}{8 \rho_n}$, since each hop can traverse a distance of
at the most $8 \rho_n$. Lemma \ref{lemmartng} shows that at least $\frac{1}{40}$th of these hops, i.e., at least
$\frac{\hat{L}_i}{320 \rho_n}$ hops of connection $i$ have transmitter-receiver pair separated by $0.01 \rho_n$ or more. 
As in Section \ref{secpi1}, let $A_i$ be the set of hops of connection $i$ for which the received signal is at the most $P
(0.01 \rho_n)^{-\alpha}$. Then,
\begin{equation}
\label{newnAi}
|A_i| \geq \frac{\hat{L}_i}{320 \rho_n}.
\end{equation}
Let $B_i$ be the set of hops of connection $i$ for which there is no simultaneous transmissions within a distance of 
$(M+8) \rho_n$ of the receiver.
Using Lemma \ref{lemmaMgen}, given $\epsilon >0$, 
we can choose $M$ large enough so that
\begin{align}
|B_i| &\leq \frac{\hat{L}_i}{\rho_n} \left( \frac{2(1+c_1)}{M}\right)
      \leq \frac{\hat{L}_i}{\rho_n} \epsilon.
\label{newnBi}			
\end{align}
Then using \eqref{newnAi} and \eqref{newnBi},
\begin{align*}
|A_i \cap {B_i}^{c}| &\geq |A_i| - |B_i| \nonumber \\
& \geq \frac{\hat{L}_i}{\rho_n} \left( \frac{1}{320}\right) - \frac{\hat{L}_i}{\rho_n} \epsilon \nonumber \\
& = \frac{\hat{L}_i}{\rho_n} \left( \frac{1}{320} - \epsilon\right).
\end{align*}
If we pick $\epsilon =  1/640$, and choose $M=M_0$ corresponding to this choice of
$\epsilon$,
\begin{equation}
|A_i \cap {B_i}^{c}| \geq \frac{\hat{L}_i}{640 \rho_n}.
\label{newnAB}
\end{equation}
Note that $A_i \cap {B_i}^{c}$ is the set of hops over which the received signal is no more than $P
(0.01 \rho_n)^{-\alpha}$, {\it and} there is at least one simultaneous transmission within a distance of $(M_0
+ 8) \rho_n$ of the receiver. This in turn means that for these hops, the SINR is upper bounded by
\begin{align}
\label{newsinr}
SINR &\leq \frac{P (0.01 \rho_n)^{-\alpha}}{P \left((M_0 + 8)\rho_n\right)^{-\alpha}} \\
\label{newsinr}
&= 100^\alpha \left( {M_0 + 8} \right)^{\alpha}
= \beta_1 .\nonumber
\end{align}
The above upper bound on SINR may seem large due to the $100^\alpha$ term. However, this term
appears because we picked up a conservative hop length of $0.01 \rho_n$ in proving Lemma
\ref{lemmartng} to account for all possible tessellations. For simple tessellations such as square and
hexagonal, we can use a minimum hop length of $\rho_n$, and obtain a tighter bound on
SINR. However, as far as the asymptotic nature of the solution is
concerned, the values of these constants do not matter.
\endproof%\end{proof}

\subsection*{Proof of Lemma \ref{unionbnd}}
%\begin{proof}
\begin{align*}
&P\left(\left\{\text{End-to-end packet loss on connection $j$}\right\}\right)\\
&=P\left(\cup_{i}\left\{\text{Packet loss on link $i$ of connection $j$}\right\}\right)\\
&\leq\sum\limits_{i} P\left(\left\{\text{Packet loss on link $i$ of connection $j$}\right\}\right)\\
&\leq\sum\limits_{i} \frac{\sqrt{\pi}\rho_n \epsilon}{8\sqrt{n}}\\
&=\frac{\sqrt{\pi}\rho_n \epsilon}{8\sqrt{n}}\cdot H_j\\
%\end{align*}
%\begin{align*}
&\leq\frac{\sqrt{\pi}\rho_n \epsilon}{8\sqrt{n}}\cdot \frac{16}{\pi}\frac{L_j}{\rho_n},
\hspace{0.6cm} \text{using Lemma \ref{lemmaH}.}\\
&\leq \epsilon,
\end{align*}
since $L_j$ is upper bounded by $\frac{\sqrt{\pi n}}{2}$, half the circumference of the sphere of area $n$.
\endproof%\end{proof}

\subsection*{Proof of Lemma \ref{schedreq}}
%\begin{proof}
Assuming that the interference observed at a node
%Furthermore, scheduling policy $\pi_2$ ensures that the cumulative interference
%signal has a finite second order moment (finite power). We can furthermore show that the cumulative interference
%signal also has a finite $(2+\delta)$-th moment for $\delta > \frac{4(2-\alpha)}{\alpha}>0$.
%is a Gaussian random variable,
has tail error probability, $P_e$ that falls of exponentially with SINR, 
for a given $\epsilon>0$, using Lemma \ref{unionbnd} we require a
minimum SINR of $\beta_n$ for all the links such that

\begin{align*}
{P_e}(\beta_n) &\leq
\frac{\sqrt{\pi}}{8\sqrt{n}}\rho_n \epsilon 
= \frac{5\sqrt{\pi}}{4} \sqrt{\frac{\log{n}}{n}} \epsilon.
\end{align*}

Since for large $x$, $P_e (x)$ can be upper bounded by $e^{-x}$,
we have for $x=\beta_n={\log{n}}$,
\begin{align*}
P_e({\log{n}}) &= e^{-\log{n}} \\
&= \frac{1}{n} \\
% \frac{1}{\sqrt{\pi}n}
%\left(\frac{1}{\sqrt{\log{n}}} +
%O\left(\frac{1}{{\log{n}}^{3/2}}\right)\right)\\
&\leq \frac{5\sqrt{\pi}}{4} \sqrt{\frac{\log{n}}{n}} \epsilon, \hspace{0.5cm} \text{for $n$ large.}
\end{align*}
Thus, if a scheduling policy ensures an SINR of at least
$\beta_n={\log{n}}$, then an end-to-end packet loss probability of
$\epsilon$ can be guaranteed for a given $\epsilon>0$ using Lemma \ref{unionbnd}.
\endproof%\end{proof}

\subsection*{Proof of Lemma \ref{pisched}}
%\begin{proof}
Consider an arbitrary cell $C_i$. Fix the origin to be the center of this cell, i.e.,
the common center of the circles of radii $20\sqrt{{\log{n}}}$ and $10\sqrt{{\log{n}}}$ that
respectively contain, and are contained in the cell. Consider a ring of
thickness $dr$ at a distance $r$ from the center. Since the area of each cell is
lower bounded by ${50\pi\log{n}}$, the number of cells within this ring can be no
more than $\frac{r dr}{25\log{n}}$. Even if all these cells are transmitting
concurrently, the total interference received at the center of $C_i$, and denoted by
$I_i$ is upper bounded as follows.
\begin{align*}
I_i &\leq \int\limits_{R_n}^{\infty} \frac{ r dr}{25\log{n}} \cdot P r^{-\alpha}
= \frac{ P R_n^{-\alpha+2}}{25\log{n} (\alpha-2)}.
\end{align*}
Since each cell is contained inside a circle of radius $20\sqrt{{\log{n}}{}}$, the maximum
separation between a transmitter and a receiver that lie in adjacent cells is
$80\sqrt{{\log{n}}{}}$. Hence the SINR in cell $C_i$ is greater than ${\log{n}}$ if $R_n$
satisfies the following.
\begin{align*}
\frac{P\cdot\left(80\sqrt{{\log{n}}{}}\right)^{-\alpha}}{\frac{P
R_n^{-\alpha+2}}{(\alpha-2)25\log{n}}} &\geq {\log{n}}\\
\Rightarrow R_n &\geq 80
\left(\frac{256}{(\alpha-2)}\right)^{\frac{1}{\alpha-2}}\left({\log{n}}\right)^{\frac{\alpha}{2(\alpha-2)}}.
\end{align*}
The above ensures that SINR at the center of cell $C_i$ is greater than
${\log{n}}$. To ensure an SINR of at least ${\log{n}}$ everywhere in $C_i$, we note
that the farthest distance to the cell edge from center is $20\sqrt{{\log{n}}{}}$. Thus, we require
%When $R_n$ is chosen as per Eq.~\eqref{ropt}, an SINR of at least
%$\beta_n={\log{n}}$ is achieved even on cell edge.
\begin{align*}
R_n &\geq 
20\sqrt{{\log{n}}}  +
80\left(\frac{256}{(\alpha-2)}\right)^{\frac{1}{(\alpha-2)}}\cdot\left(\log{n}\right)^{\frac{\alpha}{2(\alpha-2)}} \\
&=20\sqrt{{\log{n}}} \left(1  +
4\left(\frac{256{{{\log{n}}}}}{(\alpha-2)}\right)^{\frac{1}{(\alpha-2)}}\right)
\end{align*}
\endproof%\end{proof}

\begin{lemma}
\label{lemmaEXP}
If we pick two points randomly and uniformly on the surface of sphere $S^2$ of area $n$, and connect them by a
line (drawn along the surface of $S^2$), and if ${\bf L}$ is the random variable corresponding to the
length of this line, then for $0 < \delta < 1$,
\begin{equation*}
{\bf E} \left[ \delta^{\bf L} \right] = \frac{2\pi\left(1 +
\delta^{\frac{\sqrt{\pi n}}{2}}\right)}{4\pi + n(\log \delta)^2}.
\end{equation*}
\end{lemma}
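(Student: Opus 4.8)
The plan is to compute $\mathbf{E}\left[\delta^{\mathbf{L}}\right]$ directly by integrating over the (known) distribution of the geodesic distance $\mathbf{L}$ between two independent uniform points on a sphere of area $n$. First I would exploit the rotational symmetry of the sphere: fixing the first point (say at the north pole) does not change the distribution, so $\mathbf{L}$ is just the geodesic distance from the pole to a uniformly random point. Parametrize points by the polar angle $\theta \in [0,\pi]$ measured from the pole; the sphere has radius $a$ with $4\pi a^2 = n$, i.e.\ $a = \sqrt{n}/(2\sqrt{\pi})$, and the geodesic length to a point at angle $\theta$ is $\mathbf{L} = a\theta$, ranging over $[0,\pi a] = [0,\sqrt{\pi n}/2]$ — note this is exactly the half-circumference upper bound used elsewhere in the paper.

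Next I would write down the density. The area element on the sphere is $dA = a^2 \sin\theta\, d\theta\, d\varphi$, so integrating out the azimuthal angle $\varphi$, the probability that the second point lies in $[\theta,\theta+d\theta]$ is $\frac{2\pi a^2 \sin\theta}{n}\, d\theta = \frac{\sin\theta}{2}\, d\theta$. Changing variables to $\ell = a\theta$ gives the density of $\mathbf{L}$ as $f_{\mathbf{L}}(\ell) = \frac{1}{2a}\sin\!\left(\frac{\ell}{a}\right)$ for $\ell \in [0,\pi a]$. Then
\begin{equation*}
\mathbf{E}\left[\delta^{\mathbf{L}}\right] = \int_0^{\pi a} \delta^{\ell}\, \frac{1}{2a}\sin\!\left(\frac{\ell}{a}\right)\, d\ell = \frac{1}{2a}\int_0^{\pi a} e^{\ell \log\delta}\sin\!\left(\frac{\ell}{a}\right)\, d\ell .
\end{equation*}
This is a standard integral of the form $\int e^{b\ell}\sin(c\ell)\,d\ell$ with $b = \log\delta < 0$ and $c = 1/a$, whose antiderivative is $\frac{e^{b\ell}(b\sin(c\ell) - c\cos(c\ell))}{b^2 + c^2}$. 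Evaluating from $0$ to $\pi a$: at the upper limit $\sin(\pi) = 0$ and $\cos(\pi) = -1$, giving $e^{\pi a b}\cdot c/(b^2+c^2)$; at the lower limit $\sin 0 = 0$, $\cos 0 = 1$, giving $-c/(b^2+c^2)$. The difference is $\frac{c(e^{\pi a b} + 1)}{b^2 + c^2} = \frac{c(1 + \delta^{\pi a})}{(\log\delta)^2 + 1/a^2}$.

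Finally I would substitute back $a = \sqrt{n}/(2\sqrt{\pi})$, so $c = 1/a = 2\sqrt{\pi}/\sqrt{n}$, $\pi a = \sqrt{\pi n}/2$, and $1/a^2 = 4\pi/n$. Then
\begin{equation*}
\mathbf{E}\left[\delta^{\mathbf{L}}\right] = \frac{1}{2a}\cdot\frac{(1/a)(1 + \delta^{\sqrt{\pi n}/2})}{(\log\delta)^2 + 4\pi/n} = \frac{(1/a^2)(1 + \delta^{\sqrt{\pi n}/2})}{2\left((\log\delta)^2 + 4\pi/n\right)} = \frac{(4\pi/n)(1 + \delta^{\sqrt{\pi n}/2})}{2(\log\delta)^2 + 8\pi/n},
\end{equation*}
and multiplying numerator and denominator by $n/2$ yields $\frac{2\pi(1 + \delta^{\sqrt{\pi n}/2})}{n(\log\delta)^2 + 4\pi}$, which is the claimed expression. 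There is no serious obstacle here; the only points requiring a little care are (i) getting the normalization of the sphere's radius right so that the area is exactly $n$, and (ii) correctly evaluating the boundary terms of the $e^{b\ell}\sin(c\ell)$ integral — the lower-limit contribution $-c\cos 0$ is easy to drop by mistake. Everything else is a routine symmetry argument plus one elementary integral.
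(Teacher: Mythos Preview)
Your proposal is correct and follows essentially the same approach as the paper: fix one point at the pole by symmetry, derive the density of $\mathbf{L}$ from the spherical area element, and evaluate the resulting $\int e^{b\ell}\sin(c\ell)\,d\ell$ explicitly. The only cosmetic differences are that the paper first writes the CDF and then differentiates, and performs the substitution $x=2\sqrt{\pi}l/\sqrt{n}$, $\gamma=\delta^{\sqrt{n}/(2\sqrt{\pi})}$ before integrating by parts, whereas you quote the antiderivative formula directly and substitute at the end.
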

\vspace{0.2cm}

\begin{proof}
Since both the points are picked randomly and uniformly on the surface of $S^2$, without loss of
generality, assume that one point is located at the north pole of the sphere. Then, we can find the
probability that the other point is located within a distance of $l$ from the north
pole. This is the same as the event $A= \left\{{\bf L} \leq l \right\}$, and its probability 
equals the ratio of the
area of the shaded region in Fig.~\ref{figEXP} to the area of the sphere, which is one.
Note that all the distances are measured along the surface of the sphere. 
\begin{figure}
\epsfysize=4cm
\centerline{\epsfbox{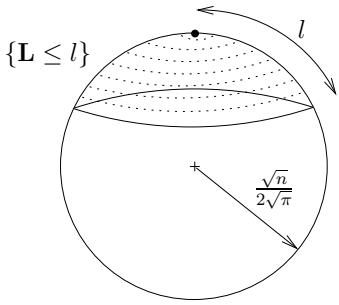}}
\caption{Probability distribution of ${\bf L}$.}
\label{figEXP}
\end{figure}
%The equation of the surface of the sphere is,
%\begin{equation*}
%f = \pm \sqrt{a^2 - x^2 - y^2} \mbox{, }
%\end{equation*}
%where $|x| \leq a$, $|y| \leq a$, and $a=1/\left(2\sqrt{\pi}\right)$.
We have,
\begin{equation}
\mbox{Area}\left\{\mbox{Shaded region}\right\}
= \frac{n}{2} \left( 1 - \cos{\left(\frac{2\sqrt{\pi}l}{\sqrt{n}}\right)}\right).
\label{area}
\end{equation}
%\begin{align}
%&\mbox{Area}\left\{\mbox{Shaded region}\right\} \nonumber \\
%&= \int\int\limits_{\left\{x^2 + y^2 \leq a^2  {\sin^2
%\left({\frac{l}{a}}\right)}\right\}}\sqrt{1+ \left(\frac{\partial f}{\partial x}\right)^2
%\left(\frac{\partial f}{\partial y}\right)^2} dx dy \nonumber \\
%&=\int\int\limits_{\left\{x^2 + y^2 \leq a^2  {\sin^2 \left({\frac{l}{a}}\right)}\right\}}
%\frac{a}{\sqrt{a^2 - x^2 - y^2}} dx dy\nonumber  \\
%&=\int\limits_{0}^{2\pi}\int\limits_{0}^{a\sin{\left(\frac{l}{a}\right)}}
%\frac{a r}{\sqrt{a^2 -r^2}} dr d\theta \nonumber \\
%&=2\pi a\int\limits_{0}^{\sin{\left(\frac{l}{a}\right)}}
%\frac{a r}{\sqrt{a^2 -r^2}} dr \nonumber \\
%&= 2\pi a^2 \left( 1 - \cos{\left( \frac{l}{a}\right)}\right) \nonumber \\
%&= \frac{1}{2} \left( 1 - \cos{\left(2\sqrt{\pi}l\right)}\right)
%\label{area}
%\end{align}
Therefore the distribution function of ${\bf L}$, i.e., $F_{L} (l)$ is
\begin{equation*}
F_{L} (l) = \frac{1}{2}\left(1 - \cos{\left(\frac{2\sqrt{\pi}l}{\sqrt{n}}\right)}\right) \mbox{.}
\end{equation*}
Note that $0 \leq {\bf L} \leq \sqrt{\pi n}/2$. 
The density function of ${\bf L}$ is given by,
\begin{equation*}
f_{L} (l) = \frac{\sqrt{\pi}}{\sqrt{n}}\sin{\left(\frac{2\sqrt{\pi}l}{{n}}\right)} \mbox{. }
\end{equation*}
Now we can compute ${\bf E}[{\delta}^{\bf L}]$ as follows.
\begin{align}
{\bf E}\left[{\delta}^{\bf L}\right] 
&= \int\limits_{0}^{\frac{\sqrt{\pi n}}{2}}  {\delta}^{l}
\frac{\sqrt{\pi}}{\sqrt{n}}\sin{\left(\frac{2\sqrt{\pi}l}{\sqrt{n}}\right)} dl \nonumber \\
&= \frac{1}{2}\int\limits_{0}^{\pi}  {\gamma}^{x} \sin{x} dx\mbox{, } \quad \mbox{ where $\gamma =
{\delta}^{\frac{\sqrt{n}}{2\sqrt{\pi}}}$} \nonumber \\
&= \frac{1}{2} I.
\label{EdeltaL}
\end{align}
Solving for $I$ using integration by parts,
%\begin{align}
%I 
%&= \int\limits_{0}^{\pi}  {\gamma}^{x} (\sin{x}) dx \nonumber \\
%&=\frac{(\sin{x}) {\gamma}^{x}}{\log{\gamma}} - \int \frac{(\cos{x}) {\gamma}^{x}}{\log{\gamma}} dx \nonumber \\
%&= \frac{(\sin{x}) {\gamma}^{x}}{\log{\gamma}} - 
%\frac{(\cos{x}) {\gamma}^{x}}{(\log{\gamma})^2} - 
%\int \frac{(\sin{x}) {\gamma}^{x}}{(\log{\gamma})^2} dx \nonumber \\
%&= \frac{(\sin{x}) {\gamma}^{x}}{\log{\gamma}} - 
%\frac{(\cos{x}) {\gamma}^{x}}{(\log{\gamma})^2} - 
%\frac{I}{(\log{\gamma})^2} \nonumber 
%\end{align}
%Thus,
%\begin{align*}
%I 
%&= \frac{1}{1 + (\log{\gamma})^2} \left[ \log{\gamma}(\sin{x}){\gamma}^{x} - (\cos{x}){\gamma}^{x} \right|_{0}^{\pi} \nonumber \\
%&=\frac{1 + {\gamma}^{\pi}}{1 + (\log{\gamma})^2} 
%\end{align*}
\begin{equation*}
I=\frac{1 + {\gamma}^{\pi}}{1 + (\log{\gamma})^2}.
\end{equation*}
Replacing $\gamma$ by ${\delta}^{\frac{1}{2\sqrt{\pi}}}$, and substituting the above in \eqref{EdeltaL},
\begin{equation}
{\bf E}\left[{\delta}^{\bf L}\right]  = \frac{2\pi\left(1 + {\delta}^{\frac{\sqrt{\pi
n}}{2}}\right)}{4\pi + n(\log{\delta})^2}.
\end{equation}
\end{proof}
\vspace{0.2cm}
Using \eqref{area} from the above Lemma, we also have the following lemma
for the area of a disk of radius $\rho_n$ on $S^2$.
\begin{lemma}
\label{lemmaEXP2}
If $\rho_n$ is the radius of a disk (measured along the surface of $S^2$),
%and the area of this disk is $a_n=100\log{n}$, 
then
\begin{equation}
\frac{\pi {\rho_n}^2}{2} \leq a_n \leq {\pi}{\rho_n}^2 \mbox{.}
\label{area2}
\end{equation}
\end{lemma}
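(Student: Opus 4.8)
The plan is to reduce \eqref{area2} to a single-variable trigonometric inequality via the spherical-cap area formula \eqref{area} established in the proof of Lemma \ref{lemmaEXP}. Place the center of the disk at the north pole of $S^2$; the disk is then exactly the shaded region of Fig.~\ref{figEXP} with $l=\rho_n$, so by \eqref{area}
\begin{equation*}
a_n=\frac{n}{2}\left(1-\cos\theta\right),\qquad\theta:=\frac{2\sqrt{\pi}\,\rho_n}{\sqrt{n}},
\end{equation*}
where $\theta$ ranges over $[0,\pi]$ since $\rho_n$ cannot exceed $\sqrt{\pi n}/2$. Because $\theta^2=4\pi\rho_n^2/n$, we have $\pi\rho_n^2=n\theta^2/4$ and $\tfrac12\pi\rho_n^2=n\theta^2/8$, so \eqref{area2} is equivalent to the sandwich $\tfrac{\theta^2}{4}\le 1-\cos\theta\le\tfrac{\theta^2}{2}$.

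The upper half, $1-\cos\theta\le\theta^2/2$, holds for every real $\theta$ (immediately from $1-\cos\theta=2\sin^2(\theta/2)\le2(\theta/2)^2$), so $a_n\le\pi\rho_n^2$ with no restriction on $\rho_n$. For the lower half I would invoke the alternating Taylor estimate $1-\cos\theta\ge\tfrac{\theta^2}{2}-\tfrac{\theta^4}{24}$, which is valid whenever the cosine series has decreasing terms, in particular for $\theta^2\le12$; and on that range $\tfrac{\theta^2}{2}-\tfrac{\theta^4}{24}\ge\tfrac{\theta^2}{4}$ precisely when $\theta^2\le6$. It then only remains to note that the relevant angle is small: throughout the paper $\rho_n=\Theta(\sqrt{\log n})=o(\sqrt n)$, hence $\theta=2\sqrt\pi\,\rho_n/\sqrt n\to0$ and in particular $\theta\le\sqrt6$ for all large $n$, which gives $1-\cos\theta\ge\theta^2/4$ and therefore $a_n\ge\tfrac12\pi\rho_n^2$.

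The one genuine subtlety, and the point I would flag as the main obstacle to a uniform-in-$\rho_n$ statement, is that the lower bound genuinely fails for very large caps: as $\theta\uparrow\pi$ one has $1-\cos\theta\to2$ while $\theta^2/4\to\pi^2/4>2$, so $a_n<\tfrac12\pi\rho_n^2$ once the disk fills nearly the whole sphere. The proof must therefore exploit the smallness of $\rho_n$ relative to $\sqrt n$ --- equivalently, the lemma should be read as holding for $n$ large enough, which is all that is ever needed, since wherever it is applied (Lemmas \ref{lemmaH}, \ref{lemmartng}, \ref{pisched}) one has $\rho_n=\Theta(\sqrt{\log n})$. No further machinery is required.
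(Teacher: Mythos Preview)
Your proof is correct and follows essentially the same route as the paper: both start from the cap-area formula \eqref{area}, reduce \eqref{area2} to an elementary bound on $1-\cos\theta$ (the paper via the half-angle identity $a_n=n\sin^2(\sqrt{\pi/n}\,\rho_n)$ together with $\phi/\sqrt{2}\le\sin\phi\le\phi$ for $\phi\le\pi/4$, you via the Taylor remainder for $\cos$), and both explicitly require the smallness condition $\rho_n=o(\sqrt n)$ for the lower bound. Your observation that the lower inequality genuinely fails as $\theta\uparrow\pi$ matches the paper's caveat that \eqref{area2} holds only once $n$ is large enough that $\rho_n\le\sqrt{\pi n}/4$.
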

\vspace{0.2cm}

\begin{proof}
Using \eqref{area} with $l=\rho_n$ gives
\begin{align*}
a_n &= \frac{n}{2}\left(1-\cos{\left(\frac{2\sqrt{\pi}\rho_n}{\sqrt{n}}\right)}\right) \\
    &= \sin^2{\left(\sqrt{\frac{\pi}{n}}\rho_n\right)}.
\end{align*}
For $\theta \leq {\pi}/4$,
\begin{equation*}
\frac{\theta}{\sqrt{2}} \leq \sin{\theta} \leq \theta.
\end{equation*}
From this, \eqref{area2} follows if $n$ is large enough so that $\rho_n \leq \sqrt{\pi n}/4$.
The latter is always true, since the $ \sqrt{\pi n}/4$ is the radius of a hemisphere on $S^2$ (measured along the
surface of $S^2$), and hence the radius of the largest disk on $S^2$.
%\begin{equation*}
%a_n = {100 \log{n}} \leq {\sin^2{\left(\sqrt{\frac{\pi}{n}}\rho_n\right)}}
%\end{equation*}
\end{proof}
\vspace{0.2cm}

\end{document}